\newtheorem{lemma}{Lemma} 
\newtheorem{theorem}{Theorem} 
\newtheorem{corollary}{Corollary} 
\newtheorem{proposition}{Proposition}
\newtheorem{definition}{Definition} 
\newproof{proof}{Proof}
\newcommand{\W}{\textsc{W}}
\newcommand{\WSAT}{\textsc{W[SAT]}}
\newcommand{\FPT}{\textsc{FPT}}
\newtheorem{observation}{Observation}
\begin{document}

\begin{frontmatter}

\title{Relativization makes contradictions harder for Resolution
\tnoteref{CSR}
}
\author{Stefan Dantchev}
\address{
School of Engineering and Computing Sciences, Durham University,\\
  Science Labs, South Road, Durham DH1 3LE, U.K.}
\author{Barnaby Martin}
\address{School of Science and Technology, Middlesex University,\\
The Burroughs, Hendon, London NW4 4BT, U.K.}
\tnotetext[CSR]{Extended abstracts of results in this paper appeared as \cite{Rel-sep} at CSR 2006 and as \cite{CSR2013DantchevMartin} at CSR 2013. Several proofs, especially those omitted from  \cite{Rel-sep}, appear here for the first time.}

\begin{abstract}
We provide a number of simplified and improved separations between pairs of Resolution-with-bounded-conjunction refutation systems, Res$(d)$, as well as their tree-like versions, Res$^*(d)$. The contradictions we use are natural combinatorial principles: the \emph{Least number principle}, LNP$_n$ and an ordered variant thereof, the \emph{Induction principle}, IP$_n$.

LNP$_n$ is known to be easy for Resolution. We prove that its relativization is hard for Resolution, and more generally, the relativization of LNP$_n$ iterated $d$ times provides a separation between Res$(d)$ and Res$(d+1)$. We prove the same result for the iterated relativization of IP$_n$, where the tree-like variant Res$^*(d)$ is considered instead of Res$(d)$.

We go on to provide separations between the parameterized versions of Res$(1)$ and Res$(2)$. Here we are able again to use the relativization of the LNP$_n$, but the classical proof breaks down and we are forced to use an alternative. Finally, we separate the parameterized versions of Res$^*(1)$ and Res$^*(2)$. Here, the relativization of IP$_n$ will not work as it is, and so we make a vectorizing amendment to it in order to address this shortcoming.
\end{abstract}

\begin{keyword}
 Proof complexity \sep Lower bounds \sep Resolution-with-bounded-conjunction \sep Parameterized proof complexity
\end{keyword}
\end{frontmatter}

\section{Introduction}

We study the power of relativization in \emph{Propositional proof complexity}, \mbox{i.e.} we are
interested in the following question: given a propositional proof system is there a first-order (FO) sentence which is easy but whose relativization is hard (within the system)?
The main motivation for studying relativization comes from a work of Kraj\'{\i}\v{c}ek,
\cite{Combinatorics_of_FO}. He defines a combinatorics of FO structure and a relation of covering
between FO structures and propositional proof systems. The combinatorics contains
all the sentences easy for the proof system. On the other hand, as defined in \cite{Combinatorics_of_FO}, it is closed under relativization. Thus the existence of a sentence, which is easy but whose relativization is hard, for the underlying proof system, would imply that it is impossible
to capture the class of “easy” sentences by a combinatorics. Ideas of relativization have also appeared in \cite{RelGeneral,AtseriasMullerOliva13}. The proof, in fact refutation, system we consider is Resolution-with-bounded-conjunction, denoted Res$(d)$ and introduced by Kraj\'{\i}\v{c}ek in \cite{Krajicek's_Book}. It is an extension of Resolution in which conjunctions of up to $d$ literals are allowed instead of single literals. The tree-like version of Res$(d)$ is usually denoted Res$^* (d)$.  Kraj\'{\i}\v{c}ek proved that tree-like Resolution, and even Res$^* (d)$, have combinatorics associated with it. This follows also from Riis's complexity gap theorem for tree-like
Resolution \cite{complexity_gap}, and shows that the sentences, easy for tree-like Resolution, remain
easy after having been relativized.

The next natural system to look at is Resolution. It is stronger than Res$^* (d)$ for
any $d$, $1 \leq d \leq n$ (equivalent to Res$^* (n)$, in fact, where $n$ is the number of variables), and yet weak enough so that one could expect that it can easily prove some property of the whole universe, but cannot prove it for an arbitrary subset. 
As we show in the paper, this is indeed the case. The example is very natural, the \emph{Least number principle}, LNP$_n$. The contradiction LNP$_n$ asserts that a partial $n$-order has no minimal element. In the literature it enjoys a myriad of alternative names: the  \emph{Graph ordering principle} GOP, \emph{Ordering principle} OP and \emph{Minimal element principle} MEP. Where the order is total it is also known as TLNP and GT.
It is not hard to see that LNP$_n$ is easy for Resolution \cite{Maria}, and we prove that its relativization RLNP$_n$ is hard. A more general result has been proven in \cite{Switching_small_restrictions_journal}; however the lower bound there is weaker. We also consider iterated relativization, and show that the $d$th iteration $d$-RLNP$_n$ is hard for Res$(d)$, but easy for Res$(d+1)$.
We go on to consider the relativization question for Res$^* (d)$, where the FO language
is enriched with a built-in order. The complexity gap theorem does not hold in this
setting \cite{RelGeneral}, though we are able to show that relativization again makes some sentences harder. A variant of the \emph{Induction Principle} gives the contradiction IP$_n$, saying that there is a property which: holds for the minimal element; if it holds for a particular element, there is a bigger one for which the property holds, too; and the property does not hold for the maximal element. We prove that the $d$th iteration of the relativization of the Induction principle, $d$-RIP$_n$, is easy for Res$^* (d+1)$, but hard for Res$^* (d)$.
More precisely, our results are the following:
\begin{itemize}
\item[1.] Any Resolution refutation of RLNP$_n$
is of size $2^{\Omega(n)}$. Firstly, this answers positively to Kraj\'{\i}\v{c}ek's question. Secondly,
observing that RLNP$_n$ has an $O(n^3)$-size refutation in Res$(2)$, we get an exponential separation
between Resolution and Res$(2)$. A similar result was proved in \cite{Switching_small_restrictions_journal} (see
also \cite{AutomatizabilityResolution} for a weaker, quasi-polynomial, separation). Our proof is quite simple compared with that of \cite{Switching_small_restrictions_journal}, where this separation is a corollary of a more general result, and our lower bound is stronger.
\item[2.] $d$-RLNP$_n$ has an $O(dn^3)$-size refutation in Res$(d+1)$, but requires $2^{\Omega(n^\epsilon)}$-size refutation in Res$(d)$, where $\epsilon$ is a constant dependent on $d$. These separations were first proved in \cite{Switching_small_restrictions_journal}. As a matter of fact, we use their method but our tautologies are more natural, and our proof is a little simpler.
\item[3.] $d$-RIP$_n$ has an $O(dn^2)$-size Res$^* (d+1)$ refutation, but requires Res$^* (d)$ refutations of size $2^{\Omega(\frac{n}{d}})$. This holds for any $d$, $0 \leq d \leq n$. A similar result was proven in \cite{Tree-like_Res(k)}. Again, our tautologies are more natural, while the proof is simpler.
\end{itemize}
The second part of the paper is in the area of \emph{Parameterized proof complexity}, a program initiated in \cite{FOCS2007}, which generally aims to gain evidence that $\W[i]$ is different from $\FPT$. Typically, $i$ is so that the former is $\W[2]$, though---in the journal version \cite{FOCS2007journal} of \cite{FOCS2007}---this has been $\WSAT$ and---in the note \cite{PPCw1}---$\W[1]$ was entertained.
In the $\W[2]$ context, parameterized refutation systems aim at refuting \emph{parameterized contradictions} which are pairs $(\mathcal{F},k)$ in which $\mathcal{F}$ is a propositional CNF with no satisfying assignment of weight $\leq k$. Several parameterized (hereafter often abbreviated as ``p-'') proof systems are discussed in \cite{FOCS2007,Galesietal,BGLRjournal}. The lower bounds in  \cite{FOCS2007}, \cite{Galesietal} and \cite{BGLRjournal} amount to proving that the systems p-tree-Resolution, p-Resolution and p-bounded-depth Frege, respectively, are not \emph{fpt-bounded}. Indeed, this is witnessed by the \emph{Pigeonhole principle}, and so holds even when one considers parameterized contradictions $(\mathcal{F},k)$ where $\mathcal{F}$ is itself an actual contradiction. Such parameterized contradictions are termed ``\emph{strong}'' in \cite{BGLRjournal}, in which the authors suggest these might be the only parameterized contradictions worth considering, as general lower bounds---even in p-bounded-depth Frege---are trivial (see \cite{BGLRjournal}). We sympathize with this outlook, but remind that there are alternative parameterized refutation systems built from embedding (see \cite{FOCS2007,FOCS2007journal}) for which no good lower bounds are known even for general parameterized contradictions.

In the world of parameterized proof complexity, we already have lower bounds for $\mathrm{p\mbox{-}Res}(d)$ (as we have for p-bounded-depth Frege), but we are still interested in separating levels $\mathrm{p\mbox{-}Res}(d)$. We are again able to use the relativized Least number principle, RLNP$_n$ to separate p-Res$(1)$ and p-Res$(2)$. Specifically, we prove that \begin{itemize}
\item[4.] $(\mathrm{RLNP}_n,k)$ admits a polynomial-sized in $n$ refutation in Res$(2)$, but all p-Res$(1)$ refutations of $(\mathrm{RLNP}_n,k)$ are of size $\geq n^{\sqrt{(k-3)/16}}$. 
\end{itemize}
\noindent Although we use the same principle as in the first part of this paper, the classical proof of bottleneck counting does not adapt to the parameterized world, and instead we look for inspiration to the proof given in \cite{BGLRjournal} for the \emph{Pigeonhole principle}. For tree-like Resolution, the situation is more complicated. RIP$_n$, admits fpt-bounded proofs in Res$^*(1)$, indeed of size $O(k!)$, therefore we are forced to alter this principle. Thus we come up with the \emph{relativized vectorized induction principle} RVIP$_n$. We are able to show that 
\begin{itemize}
\item[5.] $(\mathrm{RVIP}_n,k)$ admits $O(n^4)$ refutations in Res$^*(2)$, while every refutation in Res$^*(1)$ is of size $\geq n^{k/16}$. 
\end{itemize}
\noindent Note that both of our parameterized contradictions are ``strong'', in the sense of \cite{BGLRjournal}. We go on to give extended versions of RVIP$_n$ and explain how they separate p-Res$^*(d)$ from p-Res$^*(d+1)$, for $d>1$.

This paper is organized as follows. We begin with the preliminaries.
After this, we give the results about Resolution and Res$(d)$ complexity of the $d$-RNLP$_n$ (Items 1 and 2 above) in Section~\ref{MEP-Res}. We then give the results about Res$^*(d)$ complexity of the $d$-RIP$_n$ (Item 3) in Section~\ref{IP-TlRes}. Moving to the parameterized world, we give our separation of p-Res$(1)$ from p-Res$(2)$ (Item 4) in Section~\ref{sec:res}, and our separations of p-Res$^*(d)$ from p-Res$^*(d+1)$ (Item 5) in Section~\ref{sec:tree-res}. We then conclude with some final remarks and open questions.

\section{Preliminaries}

We use the notation $[n]:=\{1,\ldots,n\}$ and we denote by $\top$ and $\bot$ the Boolean values \emph{true} and \emph{false}, respectively. A \emph{literal} is either a propositional variable or its negation. A $d$-conjunction ($d$-disjunction) is a conjunction (disjunction) of at most $d$ literals. A \emph{term} ($d$-\emph{term}) is either a conjunction ($d$-conjunction) or a constant, $\top$ or $\bot$. A $d$-\emph{DNF} is a disjunction of (an unbounded number) of $d$-conjunctions. 

A $d$-\emph{CNF} is a conjunction of (an unbounded number) of $d$-disjunctions. Thus we may identify CNFs with sets of clauses. Since variables will often be written in Roman capitals, conjuncts, disjuncts, CNFs and DNFs will benefit from being written calligraphically.

As we are interested in translating FO sentences into sets of clauses, we assume that
a finite $n$-element universe $\mathbb{U}$ is given. The elements of $\mathbb{U}$ are the first $n$ positive natural
numbers, i.e. $\mathbb{U}:=[n]$. When we say ``element'' we always assume an element from
the universe. We will not explain the translation itself; the details can be found in \cite{uniform_tautologies} or \cite{Krajicek's_Book}. An example of this translation can be found at the beginning of Section~\ref{MEP-Res}.

\subsection{Resolution and Res$(d)$}

The system of Resolution aims to refute a set of clauses by inferring from them the empty clause (a logical contradiction). We will introduce this system through its generalization, due to  Kraj\'{\i}\v{c}ek \cite{Krajicek's_Book}, to Res$(d)$.

Res$(d)$ is a system to refute a set of $d$-DNFs. There are four derivation rules. The $\wedge$-\emph{introduction rule} allows one to derive from $\mathcal{P} \vee \bigwedge_{i\in I_{1}}\ell_{i}$ and $\mathcal{Q} \vee \bigwedge_{i\in I_{2}}\ell_{i}$, $\mathcal{P} \vee \mathcal{Q} \vee \bigwedge_{i\in I_{1} \cup I_2} \ell_{i}$, provided $|I_1 \cup I_2|\leq d$ ($\mathcal{P}$ and $\mathcal{Q}$ are $d$-DNFs). The \emph{cut} (or \emph{resolution}) rule allows one to derive from $\mathcal{P} \vee \bigvee_{i\in I}\ell_{i}$ and $\mathcal{Q} \vee \bigwedge_{i\in I}\neg \ell_{i}$, $\mathcal{P} \vee \mathcal{Q}$. Finally, the two weakening rules allow the derivation of  $\mathcal{P} \vee \bigwedge_{i\in I}\ell_{i}$ from $\mathcal{P}$, provided $|I| \leq d$, and $\mathcal{P} \vee \bigwedge_{i\in I_{1}}\ell_{i}$ from $\mathcal{P} \vee \bigwedge_{i\in I_{1} \cup I_2}\ell_{i}$. 

A Res$(d)$ refutation can be considered as a directed acyclic graph (DAG), whose sources
are the initial clauses, called also axioms, and whose only sink is the empty clause. We
will measure the size of a refutation as the number of internal nodes of the graph, i.e.
the number of applications of a derivation rule.
Whenever we say ``we refute an FO sentence in Res$(d)$'', we mean that we first
translate the sentence into a set of clauses defined on a finite universe of size $n$, and
then refute it with Res$(d)$. The size of the refutation is then a function in $n$.
In the notation ``Res$(d)$'', $d$ may be seen as a function in the number of propositional
variables. Important special cases are Res$(\log)$ as well as Res(const).
Clearly Res$(1)$ is (ordinary) Resolution. In this
case, we have only usual clauses, \mbox{i.e.} disjunctions of literals. The cut rule becomes the usual resolution rule, and only the first weakening rule is meaningful. The letter $d$ is reserved for use as relating to the parameter of some Res$(d)$. 

\subsection{Res$(d)$ as a branching program}

If we turn a Res$(d)$ refutation of a given set of $d$-DNFs $\Sigma$ upside-down, i.e. reverse the edges of the underlying graph and negate the $d$-DNFs on the vertices, we get a special kind of \emph{restricted branching $d$-program}. We introduce them in generality (not just for $d=1$) as we will use them in the part of the paper on parameterized proof complexity. The restrictions are
as follows.
Each vertex is labelled by a $d$-CNF which partially represents the  
information
that can be obtained along any path from the source to the vertex (this is a \emph{record} in the parlance of \cite{proofs_as_games}).
Obviously, the (only) source is labelled with the constant $\top$.
There are two kinds of queries, which can be made by a vertex:
\begin{enumerate}
\item Querying a new $d$-disjunction, and branching on the answer: that is, from $\mathcal{C}$ and the question $\bigvee_{i\in I}\ell_{i}?$ we split on $\mathcal{C}\wedge\bigvee_{i\in I}\ell_{i}$ and  $\mathcal{C}\wedge\bigwedge_{i\in I} \neg \ell_{i}$.
\item Querying a known $d$-disjunction, and splitting it according to  
the answer: that is, from $\mathcal{C} \wedge \bigvee_{i\in I_1 \cup I_2}\ell_{i}$ and the question $\bigvee_{i\in I_1}\ell_{i}?$ we split on $\mathcal{C} \wedge \bigvee_{i\in I_1}\ell_{i}$ and  $\mathcal{C} \wedge \bigvee_{i\in I_2}\ell_{i}$.
\end{enumerate}
\noindent There are two ways of forgetting information. From $\mathcal{C}_1 \wedge \mathcal{C}_2$ we can move to $\mathcal{C}_1$. And from $\mathcal{C} \wedge \bigvee_{i\in I_1}\ell_{i}$ we can move to $\mathcal{C} \wedge \bigvee_{i\in I_1 \cup I_2}\ell_{i}$. The point is that forgetting allows us to equate the information obtained along two different branches and thus to merge them into a single new vertex. A sink of the branching $d$-program must be labelled with the negation of a $d$-DNFs from $\Sigma$. Thus the branching $d$-program is supposed by default to solve the \emph{Search problem for $\Sigma$}: given an assignment of the variables, find a $d$-DNF which is falsified under this assignment.

The equivalence between a Res$(d)$ refutation of $\Sigma$ and a branching $d$-program of the kind above is obvious. If we do not allow the forgetting of information, we will not be able to merge distinct branches, so what we get is a class of decision trees that correspond precisely to the tree-like version of these refutation systems. Indeed, a tree-like branching $2$-program is depicted later in the paper in Figure~\ref{fig:figure1}. 
Naturally, if we allow querying single variables only, we get branching $1$-programs---decision DAGs---that correspond to Resolution.
These decision DAGs permit the view of Resolution as a game between a Prover and Adversary (originally due to Pudlak in \cite{proofs_as_games}). Playing from the unique source, Prover questions variables and Adversary answers either that the variable is true or false (different plays of Adversary produce the DAG). Internal nodes are labelled by conjunctions of facts (\emph{records} to Pudlak) and the sinks hold conjunctions that contradict an initial clause. Prover may also choose to forget information at any point---this is the reason we have a DAG and not a tree. Of course, Prover is destined to win any play of the game---but a good Adversary strategy can force that the size of the decision DAG is large, and many Resolution lower bounds have been expounded this way. 

In order to prove our lower bounds on Resolution refutations, we will use the well-known and classical technique of \emph{bottleneck counting}. This was introduced by Haken in his seminal paper \cite{Haken's_classical} (for the modern treatment see \cite{proofs_as_games}). We first define the concept of \emph{big clause}. We then design random restrictions, so that they ``kill'' (\mbox{i.e.} evaluate to $\top$) any big clause with high probability (w.h.p.). By the union bound, if there are few big clause, there is a restriction which kills them all. We now consider the restricted set of clauses, and using the Prover-Adversary game, show that there has to be at least one big clause in the
restricted proof, which is a contradiction that completes the argument.

The case of Res$(d)$ is not so easy. A general method for proving lower bounds is
developed in \cite{Switching_small_restrictions_journal}. We first hit the refutation by random restrictions, such that all the $d$-DNFs in the refutation, under the restrictions, can be represented by shallow Boolean decision trees w.h.p.  We then use the fact, proved in \cite{Switching_small_restrictions_journal}, that such a refutation can be transformed
into a small width Resolution refutation. Finally we consider the restricted set of clauses, and using the Prover-Adversary game, show that there has to be at least one big clause in the Resolution refutation. This gives the desired contradiction to the assumption that the initial Res$(d)$ refutation contains a small number of $d$-DNFs.

The case of tree-like refutations, either Resolution or Res$(d)$, is much simpler, as a tree-like refutation
of a given set of clause is equivalent to a decision tree, solving the search
problem. We can then use a quite straightforward adversary argument against a decision tree, in order to show that it has to have many nodes. Adversary will play to a strategy that occasionally permits him to give Prover a free choice, this allows the branching in a subtree that gives a lower bound on that for the refutation.

\subsection{Parameterized refutation systems}

A \emph{parameterized language} is a language $L\subseteq A^* \times \mathbb{N}$ where $A$ is an alphabet; in an instance $(x,k) \in L$, we refer to $k$ as the \emph{parameter}. A parameterized language is \emph{fixed-parameter tractable} (fpt and in \FPT) if membership in $L$ can be decided in time $f(k)|x|^{O(1)}$ for some computable function $f$. If FPT is the parameterized analog of P, then (at least) an infinite chain of classes vie for the honour to be the analog of NP. The so-called W-hierarchy sits thus: $\FPT \subseteq \W[1] \subseteq \W[2] \subseteq \ldots \subseteq \WSAT$. For more on parameterized complexity and its theory of completeness, we refer the reader to the monographs \cite{DowneyFellows,FlumGrohe}. Recall that the \emph{weight} of an assignment to a propositional formula is the number of variables evaluated to true. Of particular importance to us is the parameterized problem \textsc{Bounded-CNF-Sat} whose input is $(\mathcal{F},k)$ where $\mathcal{F}$ is a formula in CNF and whose yes-instances are those for which there is a satisfying assignment of weight $\leq k$. \textsc{Bounded-CNF-Sat} is complete for the class $\W[2]$, and its complement (modulo instances that are well-formed formulae) \textsc{PCon} is complete for the class co-$\W[2]$. Thus, \textsc{PCon} is the language of \emph{parameterized contradictions}, $(\mathcal{F},k)$ \mbox{s.t.} $\mathcal{F}$ is a CNF which has no satisfying assignment of weight $\leq k$.  The letter $k$ is reserved for use as pertaining to this weight bound. 

A \emph{proof system} for a parameterized language $L \subseteq A^* \times \mathbb{N}$ is a poly-time computable function $P:A^* \rightarrow A^*\times \mathbb{N}$ \mbox{s.t.} $\mathrm{range}(P)=L$. $P$ is \emph{fpt-bounded} if there exists a computable function $f$ so that each $(x,k)\in L$ has a proof of size at most $f(k)|x|^{O(1)}$. 
These definitions come from \cite{Galesietal,BGL-SAT,BGLRjournal} and are slightly different from those in \cite{FOCS2007,FOCS2007journal} (they are less unwieldy and have essentially the same properties). The program of \emph{parameterized proof complexity} is an analog of that of Cook-Reckow \cite{Proof_Complexity_start}, in which one seeks to prove results of the form $\W[2]\neq$co-$\W[2]$ by proving that parameterized proof systems are not fpt-bounded. This comes from the observation that there is an fpt-bounded parameterized proof system for a co-$\W[2]$-complete $L$ if $\W[2]=$co-$\W[2]$.

The system of \emph{parameterized Resolution} \cite{FOCS2007} seeks to refute the parameterized contradictions of \textsc{PCon}. Given $(\mathcal{F},k)$, where $\mathcal{F}$ is a CNF in variables $x_1,\ldots,x_n$, it does this by providing a Resolution refutation of 
\begin{equation}
\mathcal{F}\cup \{\neg x_{i_1}\vee \ldots \vee \neg x_{i_{k+1}} : 1 \leq i_1 < \ldots < i_{k+1} \leq n \}.
\label{equ:W[2]}
\end{equation}
Thus, in parameterized Resolution we have built-in access to these additional clauses of the form $\neg x_{i_1}\vee \ldots \vee \neg x_{i_{k+1}}$, but we only count those that appear in the refutation. We may consider any refutation system as a parameterized refutation system, by the addition of the clauses given in (\ref{equ:W[2]}). In particular, parameterized Res$(d)$, p-Res$(d)$, will play a part in the sequel.

\section{Relativized Least number principle and Res$(d)$}
\label{MEP-Res}


\noindent The \emph{Least number principle}, 
states that a (partial) order, defined on a finite set of $n$ elements,
has a minimal element. Its negation LNP can be expressed as the following
FO sentence:\begin{eqnarray}
 &  & \left(\left(\forall x\: \neg L\left(x,x\right)\right)\wedge \right.\nonumber \\
 &  & \left(\forall x,y,z\: \left(L\left(x,y\right)\wedge L\left(y,z\right)\right)\rightarrow L\left(x,z\right)\right)\wedge \label{eq:MEP}\\
 &  & \left.\left(\forall x\exists y\: L\left(y,x\right)\right)\right).\nonumber 
\end{eqnarray}
Here $L\left(x,y\right)$ stands for $x<y$. The encoding of LNP$_{n}$ as a set of clauses
is as follows.
\begin{eqnarray*}
\neg L_{i,i} & i \in [n] \\
\neg L_{i,j} \vee \neg L_{j,\ell} \vee L_{i,\ell} & i,j,\ell \in [n] \\  
\bigvee_{i \in [n]} S_{i,j} & j \in [n] \\
\neg S_{i,j} \vee L_{i,j} & i,j \in [n] \\
\end{eqnarray*}
where $S$ is the Skolem relation, witnessing the existential variable $y$ from $\forall x\exists y$  $L\left(y,x\right)$,
i.e., for each $j$, $S_{i,j}=\top $ implies that the $i$th element
is smaller than the $j$th one. Of course, this $S$ relation is unnecessary in the standard LNP$_{n}$ (one may remove it and replace each $\bigvee_{i \in [n]} S_{i,j}$ with $\bigvee_{i \in [n]} L_{i,j}$). However, it will become necessary in the relativizations which we now introduce.

\

\noindent The negation of the \emph{$d$-Relativized Least number principle}, $d$-RLNP$_{n}$,
is as follows. Let $R^{p}$, $1\leq p\leq d$, be the unary predicates
which we relativize by, and let us denote by $\mathcal{R}\left(x\right)$
the conjunction $\bigwedge _{p\in \left[d\right]}R^{p}\left(x\right)$.
 $d$-RLNP$_{n}$ is the following sentence:\begin{eqnarray*}
 &  & \left(\left(\forall x\: \mathcal{R}\left(x\right)\rightarrow \neg L\left(x,x\right)\right)\wedge \right.\\
 &  & \left(\forall x,y,z\: \mathcal{R}\left(x\right)\wedge \mathcal{R}\left(y\right)\wedge \mathcal{R}\left(z\right)\rightarrow \right.\\
 &  & \left.\left(L\left(x,y\right)\wedge L\left(y,z\right)\right)\rightarrow L\left(x,z\right)\right)\wedge \\
 &  & \left(\forall x\exists y\: \mathcal{R}\left(x\right)\rightarrow \left(\mathcal{R}\left(y\right)\wedge L\left(y,x\right)\right)\right)\wedge \\
 &  & \left.\left(\exists x\: \mathcal{R}\left(x\right)\right)\right).
\end{eqnarray*}
What this is saying is that the negation of the least number principle holds on the subuniverse given by  $\mathcal{R}(x)$, and this subuniverse is non-empty. The corresponding translation into clauses (simplified by assuming the witness to the final $\mathcal{R}$ be $n$) gives the following.
\begin{eqnarray*}
\neg \mathcal{R}_i \vee \neg L_{i,i} & i \in [n] \\
\neg \mathcal{R}_i \vee \neg \mathcal{R}_j \vee \neg \mathcal{R}_\ell \vee \neg L_{i,j} \vee \neg L_{j,\ell} \vee L_{i,\ell} & i,j,\ell \in [n] \\  
\bigvee_{i \in [n]} S_{i,j} & j \in [n] \\
\neg S_{i,j} \vee \neg \mathcal{R}_j \vee R^p_i & i, j \in [n]; p \in [d] \\
\neg S_{i,j} \vee \neg \mathcal{R}_j \vee L_{i,j} & i, j \in [n] \\
R^p_n & p \in [d] \\
\end{eqnarray*}
\noindent We generally write RLNP$_{n}$ for  $1$-RLNP$_{n}$.

\subsection{The upper bound: $d$-RLNP$_{n}$ is easy for Res$\left(d+1\right)$}

\begin{proposition} There is an $O\left(dn^{3}\right)$ size Res$\left(d+1\right)$
refutation of $d$-RLNP$_{n}$.
\end{proposition}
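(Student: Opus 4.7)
The plan is to mirror, in Res$(d+1)$, the classical $O(n^3)$-size Resolution refutation of $\mathrm{LNP}_n$, paying an $O(d)$ overhead per step to cope with the $d$ relativising predicates. Throughout I would maintain the family of $(d+1)$-DNFs
\[
A(j,k) \;:=\; \neg \mathcal{R}_j \;\vee\; \bigvee_{i \in [k]\setminus\{j\}} \bigl(\mathcal{R}_i \wedge L_{i,j}\bigr), \qquad j\in[n],\ k\in\{0,1,\dots,n\},
\]
which asserts that if $j$ lies in the relativised subuniverse then some element of $[k]\setminus\{j\}$ also does and is strictly smaller than $j$. Each positive disjunct is a $(d+1)$-literal conjunction that exactly saturates the Res$(d+1)$ width bound, while $\neg\mathcal{R}_j$ expands to the $d$-literal disjunction $\bigvee_p \neg R^p_j$. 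The induction on $k$ runs downward from $n$ to $0$, and the refutation closes by cutting $A(n,0)=\neg\mathcal{R}_n$ against the $d$ axioms $R^p_n$.

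For the base case $A(j,n)$, I would first build the auxiliary clause $H_{i,j}:=\neg S_{i,j}\vee\neg\mathcal{R}_j\vee(\mathcal{R}_i\wedge L_{i,j})$ by iteratively $\wedge$-introducing the Skolem axioms $\neg S_{i,j}\vee\neg\mathcal{R}_j\vee R^p_i$ ($p\in[d]$) with $\neg S_{i,j}\vee\neg\mathcal{R}_j\vee L_{i,j}$; each of the $d$ applications grows the positive conjunction by one literal and never exceeds $d+1$. Cutting the $H_{i,j}$ in turn against the existence axiom $\bigvee_i S_{i,j}$ collapses to $\neg\mathcal{R}_j\vee\bigvee_i(\mathcal{R}_i\wedge L_{i,j})$, and a cut against $\neg\mathcal{R}_j\vee\neg L_{j,j}$ removes the diagonal disjunct. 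In the inductive step, the only ``new'' disjunct of $A(j,k)$ relative to $A(j,k-1)$ is $(\mathcal{R}_k\wedge L_{k,j})$; semantically, this disjunct together with $\mathcal{R}_j$ invokes $A(k,k-1)$ to supply some $(\mathcal{R}_i\wedge L_{i,k})$ with $i\in[k-1]$, and transitivity yields $(\mathcal{R}_i\wedge L_{i,j})$ (the $i=j$ case collapsing via the specialised axiom $T_{j,k,j}$ and irreflexivity). Syntactically I would process $A(k,k-1)$ disjunct-by-disjunct, cutting each $(\mathcal{R}_i\wedge L_{i,k})$ against $T_{i,k,j}$ and recombining via $H_{i,j}$ and the existence axiom to reconstruct $(\mathcal{R}_i\wedge L_{i,j})$; a final cut of $A(j,k)$ on $(\mathcal{R}_k\wedge L_{k,j})$ delivers $A(j,k-1)$.

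The main obstacle is managing the $(d+1)$-literal width ceiling inside the inductive step: the most direct $\wedge$-introduction combining $(\mathcal{R}_i\wedge L_{i,k})$ with the singleton disjunct $L_{i,j}$ of $T_{i,k,j}$ would momentarily form the forbidden $(d+2)$-literal conjunction $(\mathcal{R}_i\wedge L_{i,k}\wedge L_{i,j})$. The workaround is to draw the target positive conjunction $(\mathcal{R}_i\wedge L_{i,j})$ not from a fresh $\wedge$-introduction but from the already-available Skolem-derived $H_{i,j}$, and to combine via cuts on $(d+1)$-conjunctions so that no intermediate conjunction ever exceeds $d+1$ literals. Accounting: the base case is $O(n+d)$ per $j$ (total $O(n^2+dn)$), each of the $O(n^2)$ inductive steps costs $O(dn)$ (total $O(dn^3)$), and the concluding cuts contribute $O(d)$; the overall size is $O(dn^3)$.
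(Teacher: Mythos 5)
Your overall architecture is essentially the paper's proof: the same family of stage clauses (your $A(j,k)$), the same downward induction driven by transitivity and irreflexivity, the same Skolem-plus-$\wedge$-introduction derivation of the base case $A(j,n)$, and the same $O(dn^3)$ count. (The paper terminates at $j=1$ via an auxiliary family $\bigvee_{i\in[\ell]}\mathcal{R}_i$ rather than at $A(n,0)=\neg\mathcal{R}_n$ cut against the axioms $R^p_n$, but that difference is cosmetic; also, building the $H_{i,j}$ for a fixed $j$ already costs $d(n-1)$ $\wedge$-introductions, so your base case is $O(dn)$ per $j$ rather than $O(n+d)$, which does not affect the total.) However, the one place you depart from the paper is precisely the step you flag as the main obstacle, and your workaround does not go through. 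After cutting the disjunct $(\mathcal{R}_i\wedge L_{i,k})$ against the transitivity axiom $T_{i,k,j}$ you hold a clause containing the bare literal $L_{i,j}$ and must upgrade it to the conjunct $(\mathcal{R}_i\wedge L_{i,j})$. Routing this through $H_{i,j}=\neg S_{i,j}\vee\neg\mathcal{R}_j\vee(\mathcal{R}_i\wedge L_{i,j})$ necessarily drags the literal $\neg S_{i,j}$ into the derived clause, and there is no sound way to drop it: the element $i$ produced by transitivity need not be the Skolem witness of $j$. The only available discharge is a cut with the existence axiom $\bigvee_{i'}S_{i',j}$, which reintroduces the $n-1$ remaining Skolem literals; eliminating those via the clauses $H_{i',j}$ brings back every disjunct $(\mathcal{R}_{i'}\wedge L_{i',j})$ for $i'\in[n]$, i.e.\ you land back at a weakening of $A(j,n)$ and the induction makes no progress.

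The correct fix is simpler than the detour and is what the paper does: use the second weakening rule, which allows dropping literals \emph{from a conjunct}. From the clause $\mathcal{C}\vee(\mathcal{R}_i\wedge L_{i,k})$ derive $\mathcal{C}\vee\mathcal{R}_i$, whose distinguished disjunct is now only a $d$-conjunction; then $\wedge$-introduce it with the singleton disjunct $L_{i,j}$ of the transitivity-derived clause. The resulting conjunction $(\mathcal{R}_i\wedge L_{i,j})$ has exactly $d+1$ literals, so the width ceiling is never breached and no Skolem literal is introduced. With that replacement your inductive step, termination and size accounting all stand.
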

\begin{proof}
As in the previous subsection, $\mathcal{R}_{i}$ is the $d$-conjunction $\bigwedge _{p\in \left[d\right]}R_{i}^{p}$; clearly $\neg \mathcal{R}_{i}$ is then a $d$-disjunction.

The Res$\left(d+1\right)$ proof will consists of $n$ stages. We
will show how to construct it, starting from the $n$th stage, and
going to the $1$st one.

The $\ell$th stage clauses that we will need to derive are
\begin{eqnarray}
\neg \mathcal{R}_{j}\vee \bigvee _{i \in \left[\ell\right],\: j \neq i}\left(L_{i,j}\wedge \mathcal{R}_{i}\right) &  & j \in \left[\ell\right]\label{Stage-k-minClauses}
\end{eqnarray}
together with\begin{equation}
\bigvee _{i \in \left[\ell\right]}\mathcal{R}_{i}.\label{Stage-k-EmptyClause}\end{equation}
 Thus the $1$st stage clauses are $\neg \mathcal{R}_{1}$ and $\mathcal{R}_{1}$
which, resolved, give the empty clause. An $n$th stage clause of
the form (\ref{Stage-k-minClauses}) can be derived from the axioms. We first use $d \left(n-1\right)$ applications of $\wedge $-introductions
to derive $\neg S_{i,j} \vee \neg \mathcal{R}_{j}\vee \vee \left(\mathcal{R}_{i}\wedge L_{i,j}\right)$,
$j \in \left[n\right]$, $j \neq i$, and then $n-1$ resolutions of
variables $S_{i,j}$. Finally we {}``kill'' the literal $S_{i,i}$
by two resolutions with the axioms $\neg S_{j,j}\vee \neg \mathcal{R}_{j}\vee L_{j,j}$
and $\neg \mathcal{R}_{j}\vee \neg L_{j,j}$.

We derive the $n$th stage clause of the form (\ref{Stage-k-EmptyClause}) by directly weakening the axioms $\mathcal{R}_n$.

What remains is to show how to derive the $\left(\ell-1\right)$th stage
clauses from the $\ell$th stage ones. The clause 
\begin{eqnarray}
\label{Stage-k-1-minClauses}
\neg \mathcal{R}_{j}\vee \bigvee _{i \in \left[\ell-1\right],\: j \neq i}\left(L_{i,j}\wedge \mathcal{R}_{i}\right),\quad j \in \left[\ell-1\right]
\end{eqnarray}
can be derived as follows. We start off with the $\ell$th stage clause\[
\neg \mathcal{R}_{\ell}\vee \bigvee _{i \in \left[\ell-1\right]}\left(L_{i,\ell}\wedge \mathcal{R}_{i}\right),
\]
where we can assume the big disjunction omits $i=\ell$ by Resolution with the irreflexivity axiom  $\neg \mathcal{R}_{\ell}\vee \neg L_{\ell,\ell}$, and manipulate it at $\ell-1$ substages, the $i$th substage dealing
with the conjunction $L_{i,\ell}\wedge \mathcal{R}_{i}$, $i \in \left[\ell-1\right]$.
Let us consider the clause before the $i$th substage, and denote
it by\begin{equation}
\mathcal{C}\vee \left(L_{i,\ell}\wedge \mathcal{R}_{i}\right)\label{SubstageInitialClause}\end{equation}
(here $\mathcal{C}$ is the corresponding subclause). We first resolve
it with the transitivity axiom
\[ \neg \mathcal{R}_{i}\vee \neg \mathcal{R}_{\ell}\vee \neg \mathcal{R}_{j}\vee \neg L_{i,\ell}\vee \neg L_{\ell,j}\vee L_{i,j}\]
to get
\[
\neg \mathcal{R}_{j}\vee \neg \mathcal{R}_{\ell}\vee \neg L_{\ell,j}\vee \mathcal{C}\vee L_{i,j},\]
and then apply a $\wedge $-introduction with $\mathcal{C}\vee \mathcal{R}_{i}$,
which is a weakening of (\ref{SubstageInitialClause}) to get
\[
\neg \mathcal{R}_{j}\vee \neg \mathcal{R}_{\ell}\vee \neg L_{\ell,j}\vee \mathcal{C}\vee (L_{i,j} \wedge \mathcal{R}_i).\]
Thus after having completed the $\ell-1$ substages we get the clause
\[
\neg \mathcal{R}_{j}\vee \neg \mathcal{R}_{\ell}\vee \neg L_{\ell,j}\vee \bigvee_{i \in [\ell-1]}(L_{i,j} \wedge \mathcal{R}_i).\]
A resolution step with the irreflexivity axiom $\neg \mathcal{R}_{j}\vee \neg L_{j,j}$
``kills'' the conjunction $L_{j,j}\wedge \mathcal{R}_{j}$, and we
get
\[
\neg \mathcal{R}_{j}\vee \neg \mathcal{R}_{\ell}\vee \neg L_{\ell,j}\vee \bigvee_{i \in [\ell-1], i\neq j}(L_{i,j} \wedge \mathcal{R}_i).\]
A final cut with another $\ell$th stage clause,\[
\neg \mathcal{R}_{j}\vee \bigvee _{i \in \left[\ell \right],\: j \neq i}\left(L_{i,j}\wedge \mathcal{R}_{i}\right),\]
now gives the desired result (\ref{Stage-k-1-minClauses}).

The $\left(\ell-1\right)$th stage clause $\bigvee _{i \in \left[\ell-1\right]}\mathcal{R}_{i}$
is easier to derive. We weaken the $\ell$th stage clause $\neg \mathcal{R}_{\ell}\vee \bigvee _{i \in \left[\ell-1\right]}\left(L_{i,\ell}\wedge \mathcal{R}_{i}\right)$
$\ell-1$ times to get $\neg \mathcal{R}_{\ell}\vee \bigvee _{i \in \left[\ell-1\right]}\mathcal{R}_{i}$.
We then resolve it with the $\ell$th stage clause $\bigvee _{i \in \left[\ell \right]}\mathcal{R}_{i}$.

This completes the proof.
\end{proof}

\subsection{An optimal lower bound: RLNP$_n$ is exponentially hard for Resolution}

We will prove the following using the well-known method of random restrictions.

\begin{proposition}
\label{ResLBnd}
Any Resolution proof of RLNP$_n$ is of size $2^{\Omega \left(n\right)}$.
\end{proposition}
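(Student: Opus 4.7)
The plan is to apply the random-restriction / bottleneck-counting method described in the preliminaries. The argument proceeds by fixing a distribution of ``critical'' adversary assignments, defining a notion of \emph{big} clauses that are rare under this distribution, and showing that no refutation of subexponential size can avoid big clauses on every adversary path.

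For the distribution, consider pairs $(R,\pi)$ with $R\subseteq[n-1]$ a uniformly random subset of size $\lfloor n/2\rfloor-1$ and $\pi$ a uniformly random linear order on $R':=R\cup\{n\}$. Each pair induces an assignment $\alpha_{R,\pi}$ by setting $R_i=\top$ iff $i\in R'$, $L_{i,j}=\top$ iff $i,j\in R'$ and $\pi(i)<\pi(j)$, and $S_{i,j}=\top$ exactly when $j\in R'$ and $i$ is the $\pi$-predecessor of $j$ (with a default true witness $S_{1,j}=\top$ whenever $j\notin R'$). A direct check confirms that $\alpha_{R,\pi}$ satisfies every axiom of RLNP$_n$ except the Skolem disjunction $\bigvee_i S_{i,j^\star}$ with $j^\star=\pi^{-1}(1)$, and the total number of such assignments is $\binom{n-1}{\lfloor n/2\rfloor-1}\cdot(\lfloor n/2\rfloor)!=2^{\Omega(n)}$.

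Call a clause $C$ \emph{big} if it contains at least $\epsilon n$ literals of the form $\neg R_i$, for a suitably small constant $\epsilon>0$. A big clause is falsified by $\alpha_{R,\pi}$ only when $R'$ contains all the corresponding $\epsilon n$ indices, which occurs with probability $\binom{n-1-\epsilon n}{\lfloor n/2\rfloor-1-\epsilon n}/\binom{n-1}{\lfloor n/2\rfloor-1}=2^{-\Omega(n)}$. Hence if the refutation had fewer than $2^{\Omega(n)}$ big clauses, a union bound would produce some $(R,\pi)$ whose adversary path avoids every big clause. The contradiction is then extracted via the Prover--Adversary game on the decision DAG: before reaching the sink $\bigwedge_i\neg S_{i,j^\star}$, the path must accumulate $\Omega(n)$ positive $R$-commitments, because the minimality of $j^\star$ cannot be certified without invoking the axioms $\neg S_{i,j}\vee\neg R_j\vee R_i$ and the relativized transitivity axioms, each of whose uses positively asserts that some further index lies in $R'$.

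The main technical obstacle is this last step. Records may be populated by $L$- and $S$-literals rather than by $R$-literals, and one must show that these cannot serve as an indefinite surrogate for direct $R$-commitments. I would address this by a case analysis on the composition of the record along the path, perhaps preceded by a mild additional restriction that simplifies the dependence on $S$, so that any record leading to a sink is forced to contain $\Omega(n)$ positive $R$-literals explicitly. Disentangling the interactions between the three families of variables is where the real technical effort of the proof resides.
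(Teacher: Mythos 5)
Your high-level plan---bottleneck counting via a distribution of critical objects, a notion of ``big'' clause killed with probability $2^{-\Omega(n)}$, and a union bound---matches the paper's method, but there are two genuine gaps, and the one you acknowledge at the end is in fact the entire content of the proof. First, your notion of bigness (at least $\epsilon n$ literals of the form $\neg R_i$) is too narrow: a Resolution refutation of RLNP$_n$ is under no obligation to ever produce a clause with many $\neg R_i$ literals, since the unavoidable width can live entirely in the $L$- and $S$-variables. The paper therefore defines bigness via \emph{busy elements} ($i$ is busy in $\mathcal{C}$ if $\mathcal{C}$ mentions $R_i$, $L_{i,j}$, $L_{j,i}$ or $S_{j,i}$), and a big clause is one with at least $n/8$ busy elements; bounding the survival probability of such a clause then requires handling the correlations among the $L$-variables (the paper uses a spanning-forest argument to get $(3/4)^{p/2}$). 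Without widening the definition of ``big'' in this way, the union bound buys you nothing, because the clause you ultimately need to exhibit need not be big in your sense.

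Second, and more seriously, your choice of \emph{total} critical assignments $\alpha_{R,\pi}$ forecloses the natural proof of the final step. The paper uses \emph{partial} restrictions: it fixes everything touching $\mathbb{C}$ but leaves the variables over $\mathbb{R}$ unset, so the restricted refutation is literally a Resolution refutation of LNP$_{|\mathbb{R}|}$, and a Prover--Adversary game on that residual principle (maintaining a totally ordered busy set $\mathbb{B}$, a witness set $\mathbb{W}$ with $|\mathbb{W}|\leq|\mathbb{B}|$, and a pool of free elements) forces some clause to have at least $|\mathbb{R}|/2\geq n/8$ busy elements. Once you fix a single total assignment, the path through the DAG is determined and there is no game left to play; you would instead need a Haken-style structural charging argument for LNP, which is notoriously delicate because of the transitivity axioms and is precisely the step you defer as ``where the real technical effort of the proof resides.'' As it stands, the proposal is a plausible skeleton with the load-bearing lemma missing.
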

\begin{proof}
 The idea is to randomly divide the universe
$\mathbb{U}$ into two approximately equal parts. One of them, $\mathbb{R}$,
will represent the predicate $R$; all the variables within it will
remain unset. The rest, $\mathbb{C}$, will be the {}``chaotic''
part; all the variables within $\mathbb{C}$ and most of the variables
between $\mathbb{C}$ and $\mathbb{R}$ will be set at random. It
is now intuitively clear that while $\mathbb{C}$ kills with positive
probability a certain number of {}``big'' clauses, $\mathbb{R}$
allows to show, via an adversary argument, that at least one such
clause must be present in any Resolution refutation, after it has
been hit by the random restrictions. Therefore a huge number of {}``big''
clauses must have been presented in the original refutation.

We of course keep $R_n:=\top$. The \emph{random restrictions} are as follows. 

\begin{enumerate}
\item We first set all the variables $R_{i}$, $i\in \left[n-1\right]$,
to either $\top $ or $\bot $ independently at random with equal
probabilities, $1/2$. Let us denote the set of variables with $R_{i}=\top $
by $\mathbb{R}$, and the set of variables with $R_{i}=\bot $ by
$\mathbb{C}$, $\mathbb{C}=\mathbb{U}\setminus \mathbb{R}$. 
\item We now set all the variables $L_{i,j}$ with at least one endpoint
in $\mathbb{C}$, i.e. $\left\{ i,j\right\} \cap \mathbb{C}\neq \emptyset$,
to either $\top $ or $\bot $ independently at random with equal
probabilities, $1/2$.
\item For each $j\in \mathbb{C}$, $j\neq i$ we set $S_{i,j}$ to either
$\top $ or $\bot $ independently at random with equal probabilities
$1/2$. Note that it is possible to set all the $S_{i,j}$ to $\bot $,
thus violating an axiom. It however happens with small probability,
$1/2^{n-1}$ for a fixed $j$.
\item We finally set all the variables $S_{i,j}$ with $i\in \mathbb{C}$,
$j\in \mathbb{R}$ to $\bot $.
\end{enumerate}
Note the unset variables define exactly the non-relativized principle
on $\mathbb{R}$, LNP$_{\left|\mathbb{R}\right|}$.

By the Chernoff bound (see \cite{RandomizedAlgorithms}) the probability that $\mathbb{R}$ contains
less than $n/4$ elements is exponentially small, and we therefore have the following.
\begin{observation}
\label{GoodRandomRestrictions}
The probability that the
random restrictions are inconsistent (\mbox{i.e.} violate an axiom) or $\left|\mathbb{R}\right|\leq n/4$
is at most $\left(\textrm{n-1}\right)2^{-\left(n-1\right)}+e^{-n/16}$.
\end{observation}

A \emph{big clause} is one which contains at least $n/8$ busy elements.
The element $i$ is \emph{busy} in the clause $\mathcal{C}$ iff $\mathcal{C}$
contains one of the following variables, either positively or negatively:
$R_{i}$, $L_{i,j}$, $L_{j,i}$, $S_{j,i}$ for some $j$, $j\neq i$ (note the omission of $S_{i,j}$).
We can now deduce the following.
\begin{observation}
\label{BigClauseCollapse}
A clause, containing $p$ busy
elements, does not evaluate to $\top $ under the random restrictions
with probability at most $\left(3/4\right)^{p/2}$. 
\end{observation}

Indeed, let us consider the different cases of a busy element $i$
in the clause $\mathcal{C}$:

\begin{enumerate}
\item The variable $R_{i}$ is present in $\mathcal{C}$: The probability
that the corresponding term does not evaluate to $\top $ is $1/2$. \emph{}
\item The variable $S_{j,i}$ for some $i \neq j$ is present in $\mathcal{C}$:
The corresponding term does not evaluate to $\top $ if either $i \in \mathbb{R}$, or $i \in \mathbb{C}$ and it evaluates to
$\bot $. The probability of this is $3/4$.
\item Either the variable $L_{i,j}$ or the variable $L_{j,i}$ for some $j\neq i$
is present in $\mathcal{C}$. Let us denote the set of all such elements $i$ or $j$
by $\mathbb{V}$, $\left|\mathbb{V}\right|=\ell$, and the corresponding
subclause of $\mathcal{C}$ induced by those elements as $\mathcal{E}$. That is, $\mathcal{E}$ contains precisely those atoms of $\mathcal{C}$ that are of the form $L_{i,j}$ or $L_{j,i}$.
 Construct the graph $\mathbb{G}$ with
vertex set $\mathbb{V}$ and edge set $\mathbb{E}$ determined
by the variables $L_{i,j}$, \mbox{i.e.} $\mathbb{E}=\left\{ \left\{ i,j\right\} \mid L_{i,j}\textrm{ is present in }\mathcal{C}\right\} $.
Consider any spanning forest of $\mathbb{G}$. Assume that all the
roots are in $\mathbb{R}$ as this only increases the probability
that $\mathcal{E}$ does not evaluate to $\top $. Going from the
root to the leaves in each tree, we see that the probability that
the corresponding edge does not evaluate to $\top $ is $3/4$ (the
same reason as in the 2nd case). Moreover all the edge variables are
independent from each other, and also there are at most $\ell/2$ roots
(exactly $\ell/2$ iff the forest consists of trees having a root and
a single leaf only). Therefore the probability that the subclause
$\mathcal{E}$ does not evaluate to $\top $ is at most $\left(3/4\right)^{\ell/2}$.
\end{enumerate}
As the events from 1, 2 and 3 are independent for different elements
from $\mathbb{U}$, we have completed the argument for Observation
\ref{BigClauseCollapse}.

We can now present the main argument in the proof. We recall that a big clause is one which contains at least $n/8$ busy elements. Assume there is
a Resolution refutation of RLNP$_{n}$ which contains
less than $\left(4/3\right)^{n/8}$ big clauses. From the Observations
\ref{GoodRandomRestrictions} and \ref{BigClauseCollapse}, using
the union-bound on probabilities, we can conclude that there is a
restriction which is consistent, {}``kills'' all the big clauses
(evaluating them to $\top$), and leaves $\mathbb{R}$ big enough
($\left|\mathbb{R}\right|\geq n/4$).
This is because $\left(\textrm{n-1}\right)2^{-\left(n-1\right)}+e^{-n/16} +\left(4/3\right)^{n/8}\left(3/4\right)^{n/16}<1$.
 Recall that the restricted refutation
is nothing but a Resolution refutation of LNP$_{\left|\mathbb{R}\right|}$
on $\mathbb{R}$. What remains to show is that any such refutation
must contain a big clause which would contradict to the assumption
there were {}``few'' big clauses in the original refutation.

We will consider the Prover-Adversary game for LNP$_{\left|\mathbb{R}\right|}$.
At any time $\mathbb{R}$ is represented as a disjoint union of three
sets, $\mathbb{R}=\mathbb{B}\uplus \mathbb{W}\uplus \mathbb{F}$.
$\mathbb{B}$ is the set of all the elements busy in the current clause.
The elements of $\mathbb{B}$ are always totally ordered. $\mathbb{W}$
is the set of witnesses for some elements in $\mathbb{B}$, \mbox{i.e.} for
each $j\in \mathbb{B}$ there is an element $i\in \mathbb{W}\uplus \mathbb{B}$
such that $S_{i,j}=\top $. We assume that, at any time, any element
of $\mathbb{W}$ is smaller than all the elements of $\mathbb{B}$.
$\mathbb{F}$ is the set of {}``free'' elements. It is obvious how
Adversary maintains these sets in the Prover-Adversary game. When
a variable, which makes an element $i\in \mathbb{W}\uplus \mathbb{F}$
busy, is queried he adds $i$ at the bottom of the totally ordered
set $\mathbb{B}$, answers accordingly, and chooses some $j\in \mathbb{F}$ and moves it to $\mathbb{W}$
setting $S_{j,i}=\top $. When all the variables, which kept an element
$\mathbb{B}$ busy, are forgotten, Adversary removes $i$ from $\mathbb{B}$
and removes the corresponding witness $j$ from $\mathbb{W}$ if it
is there (note that it may be in $\mathbb{B}$, too, in which case
it is not removed). In this way Adversary can maintain the partial
assignment consistent as far as $\mathbb{F}\neq \emptyset$. Note
also that $\left|\mathbb{B}\right|\geq \left|\mathbb{W}\right|$.
Therefore at the moment a contradiction is reached we have $\left|\mathbb{B}\right|\geq \left|\mathbb{R}\right|/2\geq n/8$
as claimed.
\end{proof}

\subsection{General lower bounds: $d$-RLNP$_n$ is subexponentially hard for $Res\left(d\right)$}

We will first give the necessary background from \cite{Switching_small_restrictions_journal}.

\begin{definition}[Definition 3.1, \cite{Switching_small_restrictions_journal}]
A \emph{decision tree} is a rooted binary tree in which every internal
node queries a propositional variable, and the leaves are labelled
by either $\top $ or $\bot $. 

Thus every path from the root to a leaf may be viewed as a partial
assignment. Let us denote by $Br_{v}\left(\mathfrak{T}\right)$, for
$v\in \left\{ \top ,\bot \right\} $, the set of paths (partial assignments)
in the decision tree $\mathfrak{T}$ which lead from the root to a
leaf labelled by $v$.

A decision tree $\mathfrak{T}$ \emph{strongly represents} a DNF $\mathcal{F}$
iff for every $\pi \in Br_{v}\left(\mathfrak{T}\right)$, $\mathcal{F}\upharpoonright _{\pi }=v$.

The \emph{representation height} of $\mathcal{F}$, $h\left(\mathcal{F}\right)$,
is the minimum height of a decision tree strongly representing $\mathcal{F}$.
\end{definition}
\begin{definition}[Definition 3.2, \cite{Switching_small_restrictions_journal}]
Let $\mathcal{F}$ be a DNF, and $\mathbb{S}$ be a set of variables.
We say that $\mathbb{S}$ is a cover of $\mathcal{F}$ iff every conjunction
of $\mathcal{F}$ contains a variable from $\mathbb{S}$. The \textbf{covering
number} of $\mathcal{F}$, $c\left(\mathcal{F}\right)$, is the minimum
size of a cover of $\mathcal{F}$. 
\end{definition}

\begin{lemma}[Corollary 3.4, \cite{Switching_small_restrictions_journal}]
\label{Restr-k-DNF}
Let $d\geq 1$, $\alpha >0$, $1\geq \beta ,\gamma >0$, $s>0$, and
let $\mathscr{D}$ be a distribution on partial assignments such
that for every $d$-DNF $\mathcal{G}$, $\textrm{Pr}_{\rho \in \mathscr{D}}\left[\mathcal{G}\upharpoonright _{\rho }\neq \top \right]\leq \alpha 2^{-\beta \left(c\left(\mathcal{G}\right)\right)^{\gamma }}$.
Then for every $d$-DNF $\mathcal{F}$ :\[
\textrm{Pr}_{\rho \in \mathscr{D}}\left[h\left(\mathcal{F}\upharpoonright _{\rho }\right)\geq s\right]\leq \alpha d2^{-2\left(\beta /4\right)^{d}\left(s/2\right)^{\gamma ^{d}}}.\]
\end{lemma}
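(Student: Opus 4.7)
The plan is to prove the statement by induction on $d$, mirroring the standard pattern of switching-lemma proofs that convert a bound on the probability that a DNF is not forced to $\top$ into a bound on the probability that a DNF has large representation height.

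For the base case $d = 1$, the key observation is that for a $1$-DNF $\mathcal{F}$, the quantity $h(\mathcal{F}\upharpoonright_\rho)$ simply counts the literals of $\mathcal{F}$ whose variables remain unset by $\rho$. Consequently, if $h(\mathcal{F}\upharpoonright_\rho) \geq s$ then $\mathcal{F}$ contains a sub-$1$-DNF $\mathcal{G}$ of covering number $s$ whose variables are all untouched by $\rho$; in particular $\mathcal{G}\upharpoonright_\rho \neq \top$. Isolating such a $\mathcal{G}$ canonically (say, by taking the first $s$ surviving literals in a fixed ordering on $\mathcal{F}$) and applying the hypothesis to this fixed $\mathcal{G}$ delivers the bound $\alpha 2^{-\beta s^\gamma}$, which is stronger than the required $\alpha \cdot 1 \cdot 2^{-2(\beta/4)(s/2)^\gamma}$; the constants provide slack that will be needed in the inductive step.

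For the induction step from $d$ to $d+1$, the idea is to peel off one literal from each term of the $(d{+}1)$-DNF $\mathcal{F}$. Writing each term as $\ell_T \wedge T'$ with $\ell_T$ a pivot literal and $T'$ a $d$-conjunction, we form a companion $1$-DNF $\mathcal{F}_{\mathrm{pivot}} := \bigvee_T \ell_T$ and a residual collection of $d$-conjunctions. By conditioning on the behaviour of $\rho$ on the pivot variables, one of two things must happen when $h(\mathcal{F}\upharpoonright_\rho) \geq s$: either $\mathcal{F}_{\mathrm{pivot}}\upharpoonright_\rho$ already witnesses height at least $s/2$ (handled by the base case applied to a $1$-DNF), or after committing to the pivot assignment on some long branch the surviving residual terms form a $d$-DNF whose restricted representation height is at least $s/2$ (handled by the inductive hypothesis). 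Each level of the recursion halves $s$ — explaining the $(s/2)^{\gamma^d}$ in the exponent — and incurs a factor of $1/4$ in $\beta$, with one factor of $2$ coming from the union bound over the two cases and the other from the canonical isolation of the sub-DNF; the prefactor $d$ multiplying $\alpha$ accumulates from summing over the $d$ levels of the recursion.

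The main obstacle will be the second case of the inductive step: one must argue that the conditional distribution of $\rho$ (given the values of the pivot variables along a fixed long branch) still satisfies the probabilistic premise with essentially the same parameters $\alpha, \beta, \gamma$, so that the inductive hypothesis genuinely applies to the residual $d$-DNF. This requires that fixing the pivots does not collapse the covering number of the residual structure, and that the associated events are arranged so that a clean union bound over branches costs no more than the promised factors. Once the decomposition and the conditional-probability bookkeeping are set up, the conclusion follows by a routine arithmetic manipulation of exponents.
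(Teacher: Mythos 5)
There is no proof to compare against here: the paper states this lemma as an imported result (Corollary~3.4 of Segerlind--Buss--Impagliazzo \cite{Switching_small_restrictions_journal}) and gives no argument for it. Judged against the known proof in that reference, your sketch has two genuine gaps. First, in the base case you apply the hypothesis to a sub-DNF $\mathcal{G}$ chosen as ``the first $s$ surviving literals'' --- but which literals survive depends on $\rho$, so $\mathcal{G}$ is a random object and the premise $\Pr_{\rho}\left[\mathcal{G}\upharpoonright_{\rho}\neq\top\right]\leq\alpha 2^{-\beta c(\mathcal{G})^{\gamma}}$, which quantifies over \emph{fixed} $\mathcal{G}$, cannot be invoked; a union bound over all candidate sub-DNFs is too expensive. (The base case is nonetheless salvageable: for a $1$-DNF, $h(\mathcal{F}\upharpoonright_{\rho})\geq s$ already forces $c(\mathcal{F})\geq s$ and $\mathcal{F}\upharpoonright_{\rho}\neq\top$, so one applies the hypothesis to $\mathcal{F}$ itself.)

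Second, and more seriously, the inductive step rests on a pivot-literal decomposition plus conditioning on the pivots, and you yourself flag that the conditional distribution might not satisfy the premise --- that is precisely where the argument fails. Since $\mathscr{D}$ is an arbitrary distribution constrained only by the covering-number hypothesis, conditioning on the values of the pivot variables can destroy that hypothesis entirely, so the inductive hypothesis cannot be applied to the residual $d$-DNF. Moreover the companion $1$-DNF $\bigvee_{T}\ell_{T}$ being satisfied says nothing about $\mathcal{F}$ being satisfied (one true literal does not satisfy a term), so the two-case split does not actually control $h(\mathcal{F}\upharpoonright_{\rho})$. The proof in \cite{Switching_small_restrictions_journal} uses a different dichotomy that avoids conditioning altogether: either $\mathcal{F}$ contains many variable-disjoint terms, in which case the sub-DNF they form has large covering number, the hypothesis makes it (and hence $\mathcal{F}$) true w.h.p., and the height is $0$; or a maximal disjoint family is small, its variable set $S$ is a small cover, and for each of the $2^{|S|}$ \emph{fixed} assignments $\pi$ to $S$ the formula $\mathcal{F}\upharpoonright_{\pi}$ is a fixed $(d-1)$-DNF to which the inductive hypothesis applies unconditionally; one then uses $h(\mathcal{F}\upharpoonright_{\rho})\leq|S|+\max_{\pi}h(\mathcal{F}\upharpoonright_{\pi}\upharpoonright_{\rho})$ and a union bound over $\pi$, with the threshold for ``many disjoint terms'' calibrated so that the $2^{|S|}$ factor is absorbed --- which is exactly where the $(\beta/4)^{d}$ and $(s/2)^{\gamma^{d}}$ losses come from. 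Without this replacement for your pivot decomposition, the induction does not go through.
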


\begin{lemma}[Theorem 5.1, \cite{Switching_small_restrictions_journal}]
\label{RestrRes(k)2Res}
Let $\mathcal{G}$ be a set of clauses of width at most $w$. If
$\mathcal{G}$ has a $Res\left(d\right)$ refutation so that for
each line $\mathcal{L}$ of the refutation, of $\Gamma $, $h\left(\mathcal{L}\right)\leq w$,
then $\mathcal{G}$ has a Resolution refutation of width at most
$dw$. 
\end{lemma}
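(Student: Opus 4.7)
The plan is to transform the given Res$(d)$ refutation of $\mathcal{G}$ into a Resolution refutation of width at most $dw$, by replacing each $d$-DNF line $\mathcal{L}$ with an equivalent CNF $\mathrm{C}(\mathcal{L})$ of width at most $w$ and simulating each Res$(d)$ inference by Resolution steps. For each $\mathcal{L}$, fix a decision tree $\mathfrak{T}_\mathcal{L}$ of height at most $w$ that strongly represents $\mathcal{L}$: for every $\bot$-labelled leaf with corresponding root-to-leaf partial assignment $\pi$, place the clause $\bigvee_{\ell \in \pi} \neg \ell$ into $\mathrm{C}(\mathcal{L})$. Strong representation guarantees $\mathrm{C}(\mathcal{L}) \equiv \mathcal{L}$, and the axioms from $\mathcal{G}$, being clauses of width $\leq w$, already have the required form.

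I would then argue by induction on the structure of the Res$(d)$ refutation that every clause of $\mathrm{C}(\mathcal{L})$ is derivable from $\mathcal{G}$ in Resolution of width at most $dw$. Given a target $C \in \mathrm{C}(\mathcal{L}_3)$ arising from a $\bot$-branch $\pi$ of $\mathfrak{T}_{\mathcal{L}_3}$, we know $\mathcal{L}_3\upharpoonright_\pi = \bot$. For the $\wedge$-introduction and the two weakening rules the simulation is relatively easy: $\pi$ forces matching $\bot$-leaves in the premises' decision trees, so the required clause (or one subsuming it) is already an element of $\mathrm{C}(\mathcal{L}_1) \cup \mathrm{C}(\mathcal{L}_2)$, possibly after a single narrow cut.

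The main obstacle is the cut rule: $\mathcal{L}_3 = \mathcal{P} \vee \mathcal{Q}$ is obtained from $\mathcal{L}_1 = \mathcal{P} \vee \bigvee_{i \in I} \ell_i$ and $\mathcal{L}_2 = \mathcal{Q} \vee \bigwedge_{i \in I} \neg \ell_i$, with $|I| \leq d$. The branch $\pi$ forces $\mathcal{P}\upharpoonright_\pi = \mathcal{Q}\upharpoonright_\pi = \bot$, so $\mathcal{L}_1\upharpoonright_\pi$ reduces to a disjunction of some of the $\ell_i$'s while $\mathcal{L}_2\upharpoonright_\pi$ reduces to the conjunction of the corresponding $\neg \ell_i$'s; these are jointly contradictory. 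Hence $\mathrm{C}(\mathcal{L}_1)\upharpoonright_\pi \cup \mathrm{C}(\mathcal{L}_2)\upharpoonright_\pi$ is unsatisfiable and admits a Resolution refutation that cuts on the $\ell_i$'s one at a time. I would lift this refutation back by restoring, in each intermediate clause, the literals of $\pi$ that the restriction had consumed; this produces a Resolution derivation of $C = \bigvee_{\ell \in \pi}\neg\ell$ from $\mathrm{C}(\mathcal{L}_1)\cup\mathrm{C}(\mathcal{L}_2)$. The delicate part of the argument is the width accounting: a restricted cut on a single $\ell_i$ may need to combine several width-$\leq w$ clauses, and one must sequence and group the at most $d$ such cuts so that the lifted intermediate widths never exceed $dw$, which is precisely where the factors $d$ (from the size of the cut conjunction) and $w$ (from the decision-tree height) combine multiplicatively. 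Applying the induction to the final line $\bot$ of the refutation, whose trivial decision tree yields the empty clause, completes the argument.
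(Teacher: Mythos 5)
The paper does not actually prove this lemma: it is imported verbatim as Theorem~5.1 of Segerlind, Buss and Impagliazzo \cite{Switching_small_restrictions_journal}, so your attempt can only be compared against the proof in that reference. Your overall architecture does match it: replace each line $\mathcal{L}$ by the CNF $\mathrm{C}(\mathcal{L})$ of negated $\bot$-branches of a height-$\leq w$ decision tree strongly representing $\mathcal{L}$, and show inductively that each clause of $\mathrm{C}(\mathcal{L})$ has a narrow Resolution derivation from the CNFs of the premises. The skeleton is right; the problem is that the one step carrying all the weight is asserted rather than proved.

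Concretely, from ``$(\mathrm{C}(\mathcal{L}_1)\cup\mathrm{C}(\mathcal{L}_2))\upharpoonright_{\pi}$ is unsatisfiable'' you conclude that it ``admits a Resolution refutation that cuts on the $\ell_i$'s one at a time.'' But the clauses of $\mathrm{C}(\mathcal{L}_1)\upharpoonright_{\pi}$ are not clauses over the cut literals: each may mention up to $w$ unrelated variables, and unsatisfiability of a set of width-$w$ clauses gives no width bound at all on its refutations. That the restricted set is \emph{semantically equivalent} to a function of the $\leq d$ cut variables does not license resolving only on those variables. The missing ingredient is the following consequence of strong representation: if $\rho$ is any partial assignment with $\mathcal{L}\upharpoonright_{\rho}=\bot$, then every branch of $\mathfrak{T}_{\mathcal{L}}$ consistent with $\rho$ is a $\bot$-branch, and resolving the corresponding clauses of $\mathrm{C}(\mathcal{L})$ upwards along the tree restricted by $\rho$ derives $\neg\rho$ in width at most $|\rho|+w$. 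With this in hand one takes $\rho=\pi\cup\sigma$ for each of the $2^{d}$ total assignments $\sigma$ to the cut variables (for each such $\sigma$, exactly one of $\mathcal{L}_1\upharpoonright_{\pi\cup\sigma}$, $\mathcal{L}_2\upharpoonright_{\pi\cup\sigma}$ is $\bot$), obtains the clauses $\neg\pi\vee\neg\sigma$, and resolves these $2^{d}$ clauses together on the $d$ cut variables to reach $\neg\pi$; all intermediate widths are at most $2w+d=O(dw)$. The same device is needed for $\wedge$-introduction, which you classify as easy: there $\pi$ falsifies $\bigwedge_{i\in I_1\cup I_2}\ell_i$ but need not falsify either premise individually, so one must again case over assignments to the conjunct literals, and the target clause is emphatically not ``already an element of $\mathrm{C}(\mathcal{L}_1)\cup\mathrm{C}(\mathcal{L}_2)$'' obtainable by ``a single narrow cut''---it is the root of an up-to-$2^{w}$-step tree-collapse subderivation. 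Only the two weakening rules are genuinely immediate.
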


We also need the following construction.

\begin{lemma}[Subsections 8.3 and 8.4 in \cite{Switching_small_restrictions_journal}]
\label{GoodGraph}
There is an undirected
graph $\mathfrak{G}\left(\left[n\right],E\right)$ on $n$
vertices and max-degree $\theta \left(\ln n\right)$ such that any
Resolution refutation of LNP$_{n}$, restricted on $\mathfrak{G}$,
is of width $\Omega \left(n\right)$.

LNP$_{n}$, restricted on $\mathfrak{G}$, means that for each element
$i$ the witness $j$ has to be a neighbour of $i$ in $\mathfrak{G}$,
i.e. we set $S_{i,j}=\bot $ whenever $\left\{ i,j\right\} \notin E$.
\end{lemma}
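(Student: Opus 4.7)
The plan is to construct $\mathfrak{G}$ by the probabilistic method and then to prove the width lower bound via a Prover-Adversary argument that exploits an expansion property of $\mathfrak{G}$.

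First I would sample $\mathfrak{G}$ from $G(n,p)$ with $p=C\ln n/n$ for a sufficiently large constant $C$. Chernoff bounds yield that with high probability every vertex has degree $\Theta(\ln n)$, giving the required max-degree. The key random-graph property to extract is an expansion-type statement, for instance: with high probability, for every $S\subseteq[n]$ with $|S|\leq n/4$ and every vertex $i\notin S$, $i$ has at least one neighbour in $[n]\setminus S$. The probability that a fixed $i$ has all its neighbours inside a fixed $S$ of size at most $n/4$ is bounded by $(1-p)^{|[n]\setminus S|}\leq e^{-3C(\ln n)/4}$, and a careful union bound over $(S,i)$ (stratified by $|S|$) shows the property holds simultaneously for all such pairs once $C$ is taken large enough.

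Second, I would adapt the Prover-Adversary game from the proof of Proposition~\ref{ResLBnd} for LNP$_n$. The adversary again maintains the partition $[n]=\mathbb{B}\uplus\mathbb{W}\uplus\mathbb{F}$, with the additional rule that whenever she commits $S_{j,i}=\top$ she must choose $j$ so that $\{i,j\}\in E$. When the Prover's query promotes $i$ into $\mathbb{B}$, she inserts $i$ at the bottom of the totally ordered set $\mathbb{B}$ and picks a witness $j\in\mathbb{F}$ that is a neighbour of $i$ in $\mathfrak{G}$. The expansion property, applied with $S=\mathbb{B}\cup\mathbb{W}$, guarantees that such a $j$ exists provided $|\mathbb{B}\cup\mathbb{W}|\leq n/4$.

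The main obstacle is to tie the width $w$ of the refutation to the size of $\mathbb{B}\cup\mathbb{W}$ the adversary has to track: a record of width $w$ mentions at most $O(w)$ distinct elements, so one can arrange that at every node of the refutation the adversary only needs to remember $O(w)$ busy and witness vertices. Consequently, if $w$ is much less than $n$ (say $w<\alpha n$ for an appropriately small $\alpha$), then $|\mathbb{B}\cup\mathbb{W}|<n/4$ throughout the play, the adversary never runs out of graph-compatible witnesses, and so the play can be extended indefinitely---contradicting the fact that any Resolution refutation must terminate by falsifying an axiom (which, as in the unrelativized argument, requires $|\mathbb{B}|=\Omega(n)$). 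This forces $w=\Omega(n)$ and establishes the lemma.
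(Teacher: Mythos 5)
The paper itself does not prove this lemma---it is imported verbatim from Subsections 8.3 and 8.4 of \cite{Switching_small_restrictions_journal}---so your attempt can only be judged against the known argument there (and in Bonet--Galesi). Unfortunately your key random-graph property is false, and not just hard to prove: the statement ``for every $S\subseteq[n]$ with $|S|\leq n/4$ and every $i\notin S$, $i$ has a neighbour outside $S$'' fails for \emph{every} graph of max-degree $\Theta(\ln n)$, since one may simply take $S\supseteq N(i)$, a set of size $\Theta(\ln n)\ll n/4$. The probabilistic calculation also cannot be rescued: the per-pair failure probability $(1-p)^{3n/4}=n^{-3C/4}$ is only polynomially small, while the number of sets of size $n/4$ is $\binom{n}{n/4}=2^{\Theta(n)}$, so no choice of $C$ makes the union bound close. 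The property that actually holds w.h.p.\ and is used in the cited proof is a \emph{boundary}-expansion property of sets rather than of individual vertices: every $S$ with $|S|\leq \delta n$ contains many vertices having a neighbour outside $S$ (the bad event now requires many vertices to simultaneously bury all their neighbours in $S$, giving failure probability roughly $n^{-\Omega(C|S|)}$, which does beat $\binom{n}{|S|}\leq n^{|S|}$).

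The second gap is structural: the Prover--Adversary game as set up in this paper (and in your sketch) characterizes refutation \emph{size}, not width, and ``a record of width $w$ mentions $O(w)$ elements'' does not by itself let the Adversary survive, because a narrow Prover can still adaptively sweep through all $\Theta(\ln n)$ neighbours of a fixed $i$ while forgetting as it goes. To get a width bound one must either pass to the bounded-memory (existential pebble) game with a strategy that is closed under forgetting, or---as in \cite{Switching_small_restrictions_journal} and Bonet--Galesi---run the Ben-Sasson--Wigderson-style argument: define $\mu(\mathcal{C})$ as the least number of ``witness'' axioms semantically implying $\mathcal{C}$, show $\mu$ is $1$ on axioms, large on the empty clause (this is where one satisfies any small set of witness axioms using the boundary-expansion property), invoke subadditivity to find a clause of intermediate $\mu$, and then use expansion again to show any such clause mentions $\Omega(n)$ literals. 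Your outline would need to be rebuilt around both of these points.
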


We can now prove the desired result.

\begin{proposition}
For every constant $d\geq 1$ there is a constant $\varepsilon _{d}\in \left(0,1\right]$
such that any Res$\left(d\right)$ refutation of $d$-RLNP$_n$
is of size $2^{\Omega \left(n^{\varepsilon _{d}}\right)}$. 
\end{proposition}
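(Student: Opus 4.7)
The plan is to apply the random-restriction and switching-lemma machinery encapsulated in Lemmas~\ref{Restr-k-DNF} and~\ref{RestrRes(k)2Res}, together with the combinatorial Lemma~\ref{GoodGraph}, in direct analogy with the proof of Proposition~\ref{ResLBnd} but with the covering-number quantity replacing the ``busy element'' counting used there. The goal is to find a distribution $\mathscr{D}$ of partial assignments on the variables of $d$-RLNP$_n$ such that (i) with positive probability the surviving axioms encode $\mathrm{LNP}_{|\mathbb{R}|}$ restricted to the hard graph $\mathfrak{G}$ of Lemma~\ref{GoodGraph}, with $|\mathbb{R}|=\Omega(n^{\varepsilon_d})$, and (ii) for every $d$-DNF $\mathcal{G}$, $\Pr_{\rho\in\mathscr{D}}[\mathcal{G}\upharpoonright_\rho\neq\top]\le\alpha\,2^{-\beta\,c(\mathcal{G})^\gamma}$ for constants $\alpha,\beta,\gamma>0$ depending only on $d$.

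The distribution $\mathscr{D}$ extends the one used for Proposition~\ref{ResLBnd}. In a first stage, each variable $R^p_i$ with $p\in[d]$ and $i\in[n-1]$ is independently set to $\top$ with probability $q:=n^{-(1-\varepsilon_d)/d}$; define $\mathbb{R}:=\{i:R^p_i=\top\text{ for all }p\in[d]\}\cup\{n\}$ and $\mathbb{C}:=\mathbb{U}\setminus\mathbb{R}$. A Chernoff estimate gives $|\mathbb{R}|=\Theta(n^{\varepsilon_d})$ with overwhelming probability, so in particular $|\mathbb{R}|\ge n^{\varepsilon_d}/4$. In a second stage, each $L$-variable with at least one endpoint in $\mathbb{C}$ and each $S_{i,j}$ with $j\in\mathbb{C}$ is set to $\top$ or $\bot$ uniformly and independently; each $S_{i,j}$ with $i\in\mathbb{C},j\in\mathbb{R}$ is set to $\bot$; and for $i,j\in\mathbb{R}$ we additionally set $S_{i,j}:=\bot$ whenever $\{i,j\}\notin E(\mathfrak{G})$, where $\mathfrak{G}$ is the graph of Lemma~\ref{GoodGraph} drawn on $\mathbb{R}$. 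By construction the unset variables encode exactly $\mathrm{LNP}_{|\mathbb{R}|}$ restricted to $\mathfrak{G}$, and as in Proposition~\ref{ResLBnd} the probability that $\mathscr{D}$ produces an inconsistent restriction (some witness axiom entirely falsified) is negligible, using the $\Theta(\log|\mathbb{R}|)$ max-degree of $\mathfrak{G}$ to ensure enough live witnesses.

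The covering-number inequality is verified by a case analysis on the variable type appearing in a minimum cover of a $d$-DNF $\mathcal{G}$: each variable $R^p_i$, $L_{i,j}$, or $S_{i,j}$ has probability bounded away from $1$ of not being set to a value that evaluates its containing conjunct to $\bot$, and an independence/inclusion--exclusion argument in the spirit of Observation~\ref{BigClauseCollapse} yields the required exponential decay $\alpha\,2^{-\beta\,c(\mathcal{G})^{\gamma}}$. Feeding this into Lemma~\ref{Restr-k-DNF} bounds the probability that any fixed $d$-DNF $\mathcal{F}$ in the refutation fails to collapse to representation height $\le s$ by $\alpha d\,2^{-2(\beta/4)^d(s/2)^{\gamma^d}}$. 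We now choose $s=\lfloor c\,n^{\varepsilon_d}\rfloor$ with $\varepsilon_d>0$ a small constant of order $\gamma^d$, and suppose for contradiction that the refutation has size less than $2^{(\beta/4)^d(s/2)^{\gamma^d}}$. The union bound then produces a single $\rho$ that is consistent, leaves $|\mathbb{R}|\ge n^{\varepsilon_d}/4$, and collapses every line of the refutation to a $d$-DNF of representation height at most $s$. Lemma~\ref{RestrRes(k)2Res} converts the restricted refutation into a Resolution refutation of $\mathrm{LNP}_{|\mathbb{R}|}$ on $\mathfrak{G}$ of width at most $ds=O(n^{\varepsilon_d})$, contradicting the $\Omega(|\mathbb{R}|)$ width lower bound of Lemma~\ref{GoodGraph} once $\varepsilon_d$ is chosen sufficiently small.

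The main obstacle is establishing the covering-number inequality uniformly across all $d$-DNFs: the three variable types in $d$-RLNP$_n$ are killed by $\mathscr{D}$ with very different probabilities (the $R^p_\cdot$ with the small probability $q$, the local $L$- and $S$-variables with constant probability, and the cross-boundary $S$-variables deterministically), so the delicate point is balancing these contributions to a clean single exponent $\gamma$ while simultaneously keeping $|\mathbb{R}|$ large enough for Lemma~\ref{GoodGraph} to apply and driving the axiom-violation probability to $o(1)$. Once this bookkeeping is done, the remainder of the argument is a direct composition of the imported lemmas.
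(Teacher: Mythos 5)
Your high-level architecture (random restrictions, then Lemma~\ref{Restr-k-DNF}, then Lemma~\ref{RestrRes(k)2Res}, then the width bound of Lemma~\ref{GoodGraph}) is the same as the paper's, but your choice of parameters for the restriction contains a genuine error that also reveals a missing idea. You bias each $R^p_i$ to $\top$ with probability $q=n^{-(1-\varepsilon_d)/d}$ so that $|\mathbb{R}|=\Theta(n^{\varepsilon_d})$, and then assert that every variable is ``set favourably'' with probability bounded away from zero. That assertion is false under your distribution: a $d$-term such as $\mathcal{R}_i=\bigwedge_{p}R^p_i$, or any term containing positive $R$-literals for $d$ distinct elements, is satisfied with probability only $q^d=n^{-(1-\varepsilon_d)}=o(1)$. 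Since the hypothesis of Lemma~\ref{Restr-k-DNF} must hold for \emph{every} $d$-DNF, the single-term DNF $\mathcal{R}_i$ already forces $\beta=O(n^{-(1-\varepsilon_d)})$, and the conclusion of the lemma then carries the factor $(\beta/4)^d=O(n^{-d(1-\varepsilon_d)})$, which eats almost all of the exponent. One can just barely close the arithmetic by pushing $\varepsilon_d$ extremely close to $1$, but this is delicate, is not the ``direct composition of imported lemmas'' you claim, and is entirely self-inflicted. The paper instead keeps every $R^p_i$ a fair coin, so that $|\mathbb{R}|=\Theta(n/2^d)$ is \emph{linear} in $n$ and every $d$-term survives with probability at least $4^{-d}$ (this is exactly where the bound $d$ on the conjunction width is used).

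The deeper point you have missed is where the sub-constant exponent $\varepsilon_d$ actually comes from. It is not from shrinking $\mathbb{R}$: it comes from the fact that the natural independence argument gives decay in the \emph{element}-cover number $c'(\mathcal{G})$ (one extracts $c'(\mathcal{G})/(2d)$ element-disjoint, hence independent, terms), whereas Lemma~\ref{Restr-k-DNF} demands decay in the \emph{variable}-cover number $c(\mathcal{G})$. Since $m$ elements can cover $\Theta(m^2+m\log n)$ variables, the conversion (Observation~\ref{Var2ElementCover}) costs $c'=\Omega(\sqrt{c}/\log n)$, forcing $\gamma=1/2$ and $\beta=\Theta(1/\log n)$; the $\gamma^d$ in the conclusion of Lemma~\ref{Restr-k-DNF} then yields $s$ of height roughly $n$ with failure probability $\exp(-\Omega(n^{1/2^d}/(\log n)^d))$, whence $\varepsilon_d\approx 1/2^d$. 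Your write-up acknowledges that ``balancing these contributions to a clean single exponent $\gamma$'' is delicate but never identifies this element-versus-variable issue, which is the actual content of the step. Finally, you also omit the special treatment of the variables $S_{j,n}$: the element $n$ is deterministically in $\mathbb{R}$, so terms containing $S_{j,n}$ positively escape the probabilistic killing argument entirely; the paper handles the at most $O(\log n)$ such variables (few, thanks to the graph restriction) by querying them at the top of the decision tree, at the cost of a factor $n^{O(1)}$ in the union bound.
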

\begin{proof}
 We again denote by $\mathcal{R}_{i}$ the
$d$-conjunction $\wedge _{p\in \left[d\right]}R_{i}^{p}$. We consider
$d$-RNLP$_n$, restricted
on the graph $\mathbb{G}$ from Lemma \ref{GoodGraph}.
Thus we have eliminated all the {}``big''
axioms in the encoding of $d$-RNLP$_n$; the biggest
ones are now of width $\theta \left(\ln n\right)$, and therefore
it would be possible to eventually apply Lemma \ref{RestrRes(k)2Res}.

The \emph{random restrictions} are very similar to the ones in the
proof of Proposition~\ref{ResLBnd}. We fix
$\mathcal{R}_{n}$ (i.e. $R_{n}^{p}=\top $ for all
$p\in \left[d\right]$) to $\top $.
\begin{enumerate}
\item For each $i\in \left[n-1\right]$ and $p\in \left[d\right]$ we set
the variable $R_{i}^{p}$ to either $\top $ or $\bot $, independently
at random with equal probabilities, $1/2$. We denote the set of elements
with $\mathcal{R}_{i}=\top $ by $\mathbb{R}$, and the rest by $\mathbb{C}$,
$\mathbb{C}=\mathbb{U}\setminus \mathbb{R}$. Note that $\mathbb{R}\neq \emptyset$
as always $n\in \mathbb{R}$, and by the Chernoff bound\[
Prob\left[\left|\mathbb{R}\right|\leq \frac{n}{2^{d+1}}\right]\leq e^{-n/2^{d+3}}.\]

\item We now set all the variables $L_{i,j}$ with at least one endpoint
in $\mathbb{C}$, i.e. $\left\{ i,j\right\} \cap \mathbb{C}\neq \emptyset$,
to either $\top $ or $\bot $ independently at random with equal
probabilities, $1/2$.
\item For each $j\in \mathbb{C}$, and $i$, a neighbour of $j$ in $\mathbb{G}$,
we set $S_{i,j}$ to either $\top $ or $\bot $ independently at random
with equal probabilities $1/2$. Note that it is possible to set all
the $S_{i,j}$ to $\bot $, thus violating an axiom. It happens with
probability $1/2^{\deg _{\mathfrak{G}}\left(i\right)}=1/n^{\omega _{0}}$
where $\omega _{0}$ is the constant, hidden in $\theta $-denotation
in Lemma \ref{GoodGraph}, and note that we can choose $\omega _{0}\geq 1$. 
\item We finally set all the variables $S_{i,j}$ with $j\in \mathbb{R}$,
$i\in \mathbb{C}$ to $\bot $.
\end{enumerate}
The unset variables define exactly the non-relativized principle on
$\mathbb{R}$ over $\mathfrak{G}$, LNP$_{\left|\mathbb{R}\right|}$, and we have

\begin{observation}
\label{GoodRandomRestrictions-d}
The probability that
the random restrictions are inconsistent (\mbox{i.e.} violate an axiom) or $\left|\mathbb{R}\right|\leq n/2^{d+1}$
is at most $1/n^{\omega _{0}-1}+ e^{-n/2^{d+3}}$. 
\end{observation}
We will now consider the effect of the random restrictions on a given
$d$-DNF $\mathcal{C}$. We need a few definitions first: We say
that a variable $R_{i}^{p}$ for some $p\in \left[d\right]$ \emph{mentions}
the element $i$; a variable $L_{i,j}$ mentions both $i$ and $j$;
a variable $S_{i,j}$ mentions $j$ only (cf. definition of business in Proposition~\ref{ResLBnd}). A formula mentions the union
of elements mentioned by some variable from the formula. The \emph{element-cover
number} of a $d$-DNF $\mathcal{F}$, $c'\left(\mathcal{F}\right)$,
is the minimum cardinality of a set of elements, such that each element
is mentioned by at least one term of the clause. There is an obvious
connection between $c'\left(\mathcal{F}\right)$ and the cover $c\left(\mathcal{F}\right)$:

\begin{observation}
\label{Var2ElementCover} $c'\left(\mathcal{F}\right)=\Omega \left(\sqrt{c\left(\mathcal{F}\right)}/\log n\right)$.
\end{observation}
Indeed $m$ elements mention $dm$ variables $R_{i}^{p}$, and at
most ${{m \choose 2}}+m\max _{i}\deg _{\mathbb{G}}\left(i\right)$
variables $L_{i,j}$/$S_{i,j}$ which makes $O\left(m^{2}+m\log n\right)$
in total, and therefore $c\left(\mathcal{F}\right)=O\left(\left(c'\left(\mathcal{F}\right)\right)^{2}+c'\left(\mathcal{F}\right)\log n\right)$.

We can now show that any $d$-DNF $\mathcal{F}$ collapses under
random restrictions to a short decision tree w.h.p. Let us first note
that $\mathcal{F}$ mentions at most $\omega _{0}\log n$ variables
$S_{j,n}$. We build up a decision tree by first querying all these.
This contributes $\omega _{0}\log n$ to the height and there are
at most $n^{\omega _{0}}$ leaves which contains $d$-DNFs which
do not mention the $n$th element.

Let us take such a $d$-DNF $\mathcal{G}$. We are now going to use
Lemma \ref{Restr-k-DNF}. Let us perform the following experiment.
Take any term, $\mathcal{T}$, of $\mathcal{G}$. Each literal of
$\mathcal{T}$, containing a variable $R_{i}^{p}$, evaluates to $\top$ with
probability $1/2$. Each literal, containing a variable $S_{j,i}$,
has a probability for $i\in \mathbb{C}$ at least $1/2$; indeed the
rest of $\mathcal{T}$ may contain at most $d-1$ positive appearances
of variables $R_{i}^{p}$ and then the last one, not in $\mathcal{T}$,
{}``decides'' $i\in \mathbb{C}$ with probability $1/2$. Therefore
the literal, containing a variable $S_{j,i}$, evaluates to $\top$ 
with probability at least $1/4$. The same argument applies to a literal,
containing a variable $L_{i,j}$. Thus the probability that $\mathcal{T}$
evaluates to $\top $ under the random restrictions is at least $1/4^{d}$
(the fact that the term contains at most $d$ variables is essential
here; indeed consider the $d+1$-term $\mathcal{R}_{i}\wedge S_{j,i}$:
$\mathcal{R}_{i}=\top $ enforces $S_{j,i}$ either $\bot $ or unset,
and therefore there is no way to evaluate the term to $\top $). On
the other hand $\mathcal{T}$ mentions at most $2d$ elements, so
we can repeat the above procedure (i.e. picking a new term) at least
$c'\left(\mathcal{G}\right)/\left(2d\right)$ times and the probability
that each term does not evaluate to $\top $ is at most $1-1/4^{d}$.
Moreover these $c'\left(\mathcal{G}\right)/\left(2d\right)$ trials
are independent as in each of them only elements, not mentioned so
far, are involved. Therefore the probability that $\mathcal{G}$ does
not evaluate to $\top $ under the random restrictions is at most$\left(1-1/4^{d}\right)^{c'\left(\mathcal{G}\right)/\left(2d\right)}$
which is at most $\left(1-1/4^{d}\right)^{\sqrt{c\left(\mathcal{G}\right)}/\left(2d\log n\right)}$
by Observation \ref{Var2ElementCover}.

We can now apply Lemma \ref{Restr-k-DNF} with $\alpha =1$, $\beta =\omega _{1}/\log n$,
where $\omega _{1}$ is a constant, depending on $d$ only, and $\gamma =1/2$.
We set $s=\omega _{2}n$, where $\omega _{2}$ will be fixed later,
and will depend on $d$ only. What we get by the lemma is that the
probability that $\mathcal{G}$, under the random restrictions, cannot
be represented by a decision tree of height at most $\omega _{2}n$
is at most $\exp \left(-\omega _{3}n^{1/2^{d}}/\left(\log n\right)^{d}\right)$,
where $\omega _{3}$ is a constant, dependent on $d$. Going back to
the initial $d$-DNF $\mathcal{F}$, we see that the probability
$\mathcal{F}$, under the random restrictions, cannot be represented
by a decision tree of height at most $\omega _{2}n+\omega _{0}\log n$
is at most $n^{\omega _{0}}\exp \left(-\omega _{3}n^{1/2^{d}}/\left(\log n\right)^{d}\right)$.

We can finally present the main argument of the proof. Suppose, for
the sake of contradiction, that there is a Res$\left(d\right)$ refutation
of $d$-RLNP$_n$ containing less than $n^{- \omega _{0}}\exp \left(\left(\omega _{3}/2\right)n^{1/2^{d}}/\left(\log n\right)^{d}\right)$
$d$-DNFs. By the union bound, the probability that at least one
of them, under the random restrictions, cannot be represented by a
decision tree of height at most $\omega _{2}n+\omega _{0}\log n$
is at most $\exp \left(-\left(\omega _{3}/2\right)n^{1/2^{d}}/\left(\log n\right)^{d}\right)$.
By Lemma~\ref{RestrRes(k)2Res} then, with the same probability,
the restricted Res$\left(d\right)$ refutation cannot be transformed
to a width $\omega _{2}dn+\omega _{0}d\log n$ Resolution refutation. Adding
up this probability with the probability from Observation~\ref{GoodRandomRestrictions-d},
we see there exists a {}``good'' restriction, \mbox{i.e.} such that the
final Resolution refutation is of width $\omega _{2}dn+\omega _{0}d\log n$.
Recall that we are still free to choose $\omega _{2}$ (it affects
$\omega _{3}$, but it is fine as $\omega _{3}$ is a constant, dependent
on $d$, too) and it depends on $d$ only, so we can ensure that $\omega _{2}$
is smaller than the constant, hidden in the $\Omega $-notation in
Lemma~\ref{GoodGraph}, divided by $2^{d+1}$ (recall that the size
of $\mathbb{R}$ is at least $n/2^{d+1}$). This is the desired contradiction
which completes the proof.
\end{proof}

\section{Relativized Induction principle and Res$^*(d)$}
\label{IP-TlRes}

In this section we consider a version of the Induction principle,
denoted IP$_{n}$, that can be encoded as an FO
sentence if a built-in predicate, defining a total order on the universe,
is added to the language. It is easy to show that IP$_{n}$ is easy
for tree-like Resolution, and so it is also for its $d$th relativization $d$-RIP$_n$, but for Res$^*\left(d+1\right)$. Finally we prove that $d$-RIP$_n$
is hard for Res$^*\left(d\right)$.

\subsubsection{Induction principle}

The (negation of the) Induction principle, we consider, is the following
simple statement: Given an ordered universe, there is a property $P$,
such that 
\begin{enumerate}
\item The property holds for the smallest element.
\item If $P\left(x\right)$ hold for some $x$, then there is $y$, bigger
than $x$, and such that $P\left(y\right)$ holds. 
\item The property does not hold for the biggest element.
\end{enumerate}
The universe $\mathbb{U}$ can now be considered as the set of the first
$n$ natural numbers. In our language we can use the relation symbol
$<$ with its usual meaning. We can also use any
constant $c$ as well as $n-c$ (note that in the language $n$ denotes
the maximal element of $\mathbb{U}$, while $1$ denotes the minimal
one). The Induction principle, we have just described, can be written
as

\[
P({\bold 1})\wedge \forall x\exists y\: \left( (x < y \wedge P(x)) \rightarrow P(y) \right) \wedge P({\bold n}). \]
The translation into propositional logic gives the following set of
clauses\begin{eqnarray*}
P_{1},\; \neg P_{n} &  & \\
\bigvee _{j=i+1}^{n} S_{i,j} &  & i \in [n-1]\\
\neg S_{i,j}\vee \neg P_{i}\vee P_{j} &  & i,j \in [n], i<j.
\end{eqnarray*}
The relativized version's translation is\begin{eqnarray}
P_{1},\; \neg P_{n} &  & \nonumber \\
R_{1}^{p},\: R_{n}^{p} &  & p \in [d] \nonumber \\
\bigvee _{j=i+1}^{n} S_{i,j} &  & i \in [n-1] \label{IndRSkolem}\\
\neg S_{i,j}\vee \neg \mathcal{R}_{i}\vee R_{j}^{p} &  & i,j \in [n], i<j,\: p \in [d] \label{IndRSR}\\
\neg S_{i,j}\vee \neg \mathcal{R}_{i}\vee \neg P_{i}\vee P_{j} &  &  i,j \in [n], i<j.\label{IndRSP}
\end{eqnarray}

\subsection{The upper bound}

\begin{proposition}
There is an $O\left(dn^{2}\right)$ size Res$^*\left(d+1\right)$ refutation of $d$-RIP$_n$.
\end{proposition}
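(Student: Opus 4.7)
The plan is to build the refutation as a linear chain of stage formulas
\[
\mathcal{D}_i := \bigvee_{j=i+1}^{n} (\mathcal{R}_j \wedge P_j), \qquad i = 1,\ldots,n-1,
\]
each a $(d+1)$-DNF whose conjuncts have exactly $d+1$ literals, so they just barely fit inside Res$^*(d+1)$. Informally, $\mathcal{D}_i$ says ``some element strictly greater than $i$ lies in the relativized subuniverse and satisfies $P$''. I would start by combining the axioms $R_1^p$ ($p\in[d]$) and $P_1$ with $d$ successive $\wedge$-introductions to form the $(d+1)$-conjunction $\mathcal{R}_1 \wedge P_1$, and finish by weakening $\mathcal{D}_{n-1} = \mathcal{R}_n \wedge P_n$ down to $P_n$ and cutting against the axiom $\neg P_n$.

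The heart of the construction is the inductive step from $\mathcal{D}_{i-1}$ (or, for $i=1$, from $\mathcal{R}_1 \wedge P_1$) to $\mathcal{D}_i$. The key observation is that $\mathcal{D}_{i-1} = (\mathcal{R}_i \wedge P_i) \vee \mathcal{D}_i$, so a single application of the cut rule against the helper $(d+1)$-DNF
\[
\mathcal{B}_i := \neg \mathcal{R}_i \vee \neg P_i \vee \mathcal{D}_i
\]
yields $\mathcal{D}_i$ (the $\bigvee_{k\in I}\ell_k$ part of $\mathcal{B}_i$ is precisely $\neg \mathcal{R}_i \vee \neg P_i$, matching the conjunct $\mathcal{R}_i \wedge P_i = \bigwedge_{k\in I}\neg\ell_k$ of $\mathcal{D}_{i-1}$). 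Hence everything reduces to producing $\mathcal{B}_i$. For each $j \in \{i+1,\ldots,n\}$ I take the $d$ axioms of type (\ref{IndRSR}), fuse them with $d-1$ $\wedge$-introductions into $\neg S_{i,j} \vee \neg \mathcal{R}_i \vee \mathcal{R}_j$, weaken in $\neg P_i$, and $\wedge$-introduce with axiom (\ref{IndRSP})---the step that uses the extra slot available in Res$^*(d+1)$---to obtain
\[
\mathcal{C}_{i,j} := \neg S_{i,j} \vee \neg \mathcal{R}_i \vee \neg P_i \vee (\mathcal{R}_j \wedge P_j).
\]
Starting from the Skolem axiom $\bigvee_{j=i+1}^{n} S_{i,j}$ and successively cutting on $S_{i,i+1}, S_{i,i+2},\ldots, S_{i,n}$ against the corresponding $\mathcal{C}_{i,j}$, the Skolem disjunction collapses while the conjuncts $(\mathcal{R}_j \wedge P_j)$ pile up, leaving exactly $\mathcal{B}_i$.

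The accounting is $O(d)$ derivation steps for each $\mathcal{C}_{i,j}$ plus one cut per $j$ against the Skolem axiom, giving $O((n-i)d)$ steps per stage and $O(dn^2)$ in total. Tree-likeness is immediate: each of $\mathcal{D}_i$, $\mathcal{B}_i$, and $\mathcal{C}_{i,j}$ is used exactly once as the hypothesis of another inference, while axioms may appear freely at several leaves. The only real delicacy to verify is scheduling the $\wedge$-introductions so that no intermediate line carries a conjunct of more than $d+1$ literals; this constraint is precisely what makes the construction belong to Res$^*(d+1)$ and not Res$^*(d)$, mirroring the hardness half of the result.
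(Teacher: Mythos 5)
Your proposal is correct and follows essentially the same route as the paper's proof: the stage clauses $\mathcal{D}_i$, the helper clauses $\mathcal{B}_i$ and $\mathcal{C}_{i,j}$, and the final weakening-and-cut against $\neg P_n$ all coincide with the paper's construction, with only cosmetic differences in how the base case and the $\wedge$-introductions from axioms (\ref{IndRSR}) and (\ref{IndRSP}) are scheduled. The size bound $O(dn^2)$ and the tree-likeness observation also match.
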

\begin{proof}
We first apply $\wedge$-introduction between the clauses (\ref{IndRSR}) and (\ref{IndRSP}), $d$ times, to get the clauses\begin{eqnarray*}
\neg S_{i,j}\vee \neg \mathcal{R}_{i}\vee \neg P_{i}\vee \left(\mathcal{R}_{j}\wedge P_{j}\right) &  & 1\leq i<j\leq n.
\end{eqnarray*}
For every $i$ we resolve these with the clauses (\ref{IndRSkolem})
to get\begin{eqnarray}
\neg \mathcal{R}_{i}\vee \neg P_{i}\vee \bigvee _{j\geq i+1}\left(\mathcal{R}_{j}\wedge P_{j}\right). &  & \label{IndRk-th-stage-almost}
\end{eqnarray}

The $i$th stage clause is now\begin{eqnarray}
\bigvee _{j\geq i+1}\left(\mathcal{R}_{j}\wedge P_{j}\right). &  & \label{IndRk-th-stage}
\end{eqnarray}
For $i=1$ it is derived by resolving (\ref{IndRk-th-stage-almost})
with the pure-literal axioms $P_{1}$ and $R_{1}^{p}$, $p\in \left[d\right]$.
The induction step is pretty easy: we resolve the $(i-1)$th stage
clause $\bigvee _{j\geq i}\left(\mathcal{R}_{j}\wedge P_{j}\right)$ with
the clause (\ref{IndRk-th-stage-almost}) to get the $i$th stage
clause (\ref{IndRk-th-stage}). After the $(n-1)$th stage we have
derived the pure-term clause $\mathcal{R}_{n}\wedge P_{n}$. We now
weaken it to $P_{n}$ and then resolve it with the axiom $\neg P_{n}$
to get the desired empty clause.

The number of resolution steps is $O\left(dn^{2}\right)$. At each
stage we have resolved the clause obtained at the previous stage only
once, therefore the Res$\left(d+1\right)$ refutation, we have constructed,
 is tree-like.
\end{proof}

\subsection{The lower bound}

We will first prove it for $d=1$, \mbox{i.e.} that RIP$_n$
is exponentially hard for tree-like Resolution. We will then generalize
it to any $d$.

\begin{proposition} 
Any tree-like Resolution refutation of RIP$_n$
is of size $2^{\Omega \left(n\right)}$.
\end{proposition}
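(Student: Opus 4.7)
The plan is to use the Pudlak--Impagliazzo Prover--Delayer game characterisation of tree-like Resolution size: if the Delayer has a strategy that scores at least $s$ points against every Prover, then every tree-like Resolution refutation of the formula has size at least $2^{s}$. I will design a Delayer strategy for RIP$_n$ that forces $\Omega(n)$ points on every play.

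The Delayer maintains an ``active chain'' $1 = c_0 < c_1 < \ldots < c_t$ of elements committed to $R = 1$ and $P = 1$ and linked by $S$-successors, together with a set $\mathcal{O}$ of elements committed to $R = 0$. Unit axioms and any variable whose value is forced by propagation through the implication axioms $\neg S_{i,j} \vee \neg R_i \vee R_j$ and $\neg S_{i,j} \vee \neg R_i \vee \neg P_i \vee P_j$ are committed without scoring. The Delayer defensively commits $S_{i, n} = 0$ for every queried $(i, n)$, which prevents the Prover from quickly violating the width-$4$ axiom involving $P_n$. For queries $S_{c_t, j}$ with $c_t < j < n$ and $j \notin \mathcal{O}$, and for $R_j$ or $P_j$ at a still-free $j$, the Delayer says ``you pick'' and scores a point; a Prover answer of $1$ on $S_{c_t, j}$ appends $j$ to the chain (with $R_j = P_j = 1$ implicitly forced via the implication axioms), while an answer of $0$ on $R_j$ adds $j$ to $\mathcal{O}$.

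The analysis will classify how the Prover can terminate the game. The defensive commitments and the forced propagations along the chain rule out immediate violations of any implication axiom: axiom-$4$ violations at any $(i, n)$ are blocked by $S_{i, n} = 0$, and violations at a chain step are blocked by the forced $R_j = P_j = 1$. Hence the Prover can only finish via the Skolem axiom $\bigvee_{j > c_t} S_{c_t, j}$ at some chain end $c_t$: either $(A)$ by extending the chain all the way to $c_t = n - 1$, costing $n - 2$ points from the accepted ``you pick''s on $S$-extension queries; or $(B)$ by exhausting the successors at some intermediate $c_t$, answering $0$ to every ``you pick'' on $S_{c_t, j}$ with $c_t < j < n$, which costs $n - c_t - 1$ points on top of the $t$ already spent on chain extensions. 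Since $c_t \geq t + 1$ for a strictly increasing chain, case $(B)$ also yields at least $t + (n - c_t - 1) \geq n - 2$ points. Either way the Prover incurs $\Omega(n)$ points, and the Pudlak--Impagliazzo theorem gives tree-like Resolution size at least $2^{\Omega(n)}$.

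The main obstacle is a consistency check: one must ensure that the defensive commitments $S_{i, n} = 0$ do not themselves immediately falsify the Skolem clause at $i$. This only becomes dangerous at $i = n - 1$, where the Skolem clause $\bigvee_{j > n-1} S_{n-1, j}$ degenerates to $S_{n-1, n}$; the strategy is arranged so that the chain can only reach $c_t = n - 1$ after all $n - 2$ intermediate chain extensions have occurred, so that by the time the final $S_{n-1, n} = 0$ commit terminates the game, the Delayer has already banked the required $\Omega(n)$ points. A secondary detail is verifying that the forced commitments $R_j = P_j = 1$ on the newly added $c_{t+1}$ do not clash with any prior ``you pick'' commit of $R_j$ or $P_j$ to $0$; this is immediate because any such prior commit would have placed $j$ in $\mathcal{O}$ and so forced $S_{c_t, j} = 0$ on the current query, precluding a chain extension to $j$.
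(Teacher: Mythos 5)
Your framing via the Prover--Delayer game is legitimate and close in spirit to the paper's argument (the paper runs a direct adversary argument on the decision tree and derives a Fibonacci recurrence on subtree sizes; it even remarks at the end that the asymmetric game of \cite{Beyersdorff13IPL} might simplify matters). However, your particular Delayer strategy has a fatal gap: it lets the Prover extend the chain by an arbitrarily long jump in a single step, and your point count does not survive this. Concretely, the Prover asks $S_{1,n-2}$; since $n-2$ is free you offer a free choice, the Prover answers $\top$, the chain becomes $1<n-2$ at a cost of one point, and $R_{n-2}=P_{n-2}=\top$ are forced by propagation. The Skolem clause at $n-2$ is just $S_{n-2,n-1}\vee S_{n-2,n}$; you set $S_{n-2,n}=\bot$ defensively at no cost, and the single remaining free choice on $S_{n-2,n-1}$ is answered $\bot$ by the Prover. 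A clause is falsified after only two points, so the game value is $O(1)$, not $\Omega(n)$. This is exactly the flaw in your case $(B)$ arithmetic: $t+(n-c_t-1)\geq n-2$ requires the upper bound $c_t\leq t+1$, whereas a strictly increasing chain only gives the useless lower bound $c_t\geq t+1$. Your closing claim that ``the strategy is arranged so that the chain can only reach $c_t=n-1$ after all $n-2$ intermediate chain extensions have occurred'' is precisely the missing ingredient: nothing in the strategy as stated enforces it, and the stated rules (a free choice on $S_{c_t,j}$ for every free $j$ with $c_t<j<n$) contradict it.

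The repair is the locality device the paper uses. Its Adversary pre-busies element $n-1$ (setting $R_{n-1}=\top$, $P_{n-1}=\bot$, $S_{n-1,n}=\top$); answers $S_{i,j}=\bot$ at no cost whenever $i$ is the source and $j$ is busy or is a free element separated from the source by at least two other free elements; and only branches when $j$ is one of the two nearest free elements above the source, the branch being between moving the source to the nearest or to the second-nearest free element. That move satisfies the Skolem clause at the old source (so it can never be falsified there) and consumes one or two free elements, giving $T(m)\geq T(m-1)+T(m-2)$ alongside $T(m)\geq 2T(m-1)$ for off-source queries, hence $T(m)\geq\varphi_m$ and the $2^{\Omega(n)}$ bound. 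Without some such cap on how far the source can advance per query, no accounting of free choices can yield more than a constant.
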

\begin{proof}
We will use an adversary strategy against
the decision tree solving the search problem.

We say that the variables $P_{i}$, $R_{i}$ and $S_{i,j}$ for $j>i$
are \emph{associated} to the $i$th element. When one of these has
been queried for the first time by Prover, Adversary fixes all of
them, so that the $i$th element becomes \emph{busy} (\mbox{cf.} definition of business in Section~\ref{MEP-Res}). Initially only
the minimal and maximal elements are busy as the singleton clauses $P_1$, $R_1$, $\neg P_{n}$
and $R_{n}$ force the values of the corresponding variables. For
technical reasons only, we assume that the $\left(n-1\right)$th
element is busy too, by setting $S_{n-1, n}=\top $, $R_{n-1}=\top $
and $P_{n-1}=\bot $. The elements that are not busy we call {}\emph{free},
with the exception of those below the \emph{source}. The source is the biggest
element $j$, such that $R_{j}=P_{j}=\top $. Initially the source
is the first element. It is important to note that no contradiction
can be found as far as there is at least one free element bigger than
the source. All the variables associated to the elements smaller than
the source are set (consistently with the axioms) in the current partial
assignment. Thus there are free elements only between the source and
the maximal element. Informally speaking, Prover's strategy is moving
the source towards the end of the universe, the $(n-2)$th element.

We will prove that at any stage in the Prover-Adversary game, the
number of free elements can be used to lower bound the subtree, rooted
at the current node of the tree. More precisely, if $T\left(m\right)$
is the size of the subtree rooted at a node, where there are $m$
such elements, we will show that $T\left(m\right)\geq \varphi _{m}$.
Here $\varphi _{m}$ is the $m$th Fibonacci number, defined by \[
\begin{array}{l}
 \varphi _{0}=\varphi _{1}=1\\
 \varphi _{m}=\varphi _{m-1}+\varphi _{m-2}\quad \textrm{for }m\geq 2.\end{array}
\]
 Initially, we have $n-3$ free elements bigger than the source, therefore
the inequality we claim, together with the known asymptotic $\varphi _{m}\sim \frac{1}{\sqrt{5}}\left(\frac{1+\sqrt{5}}{2}\right)^{m}$,
implies the desired lower bound. 

What remains is to prove $T\left(m\right)\geq \varphi _{m}$. We use
induction on $m$. The basis cases $m=0$ or $m=1$ are trivial. To
prove the induction step, we consider all the possibilities for a
Prover's query:

\begin{enumerate}
\item It is about either a busy element or an element below the source. As already explained, the value
of such a variable is already known in the current partial assignment.
Adversary answers; the value of $m$ does not change.
\item The query is about a free element $i$, and recall that it is bigger
than the source. If the variable queried is either $R_{i}$ or $P_{i}$,
Adversary first sets $S_{i,n}=\top $, $S_{i,j}=\bot $ for all $j$,
$i<j<n$, and then allows Prover a free choice between the two possibilities: either $R_{i}=\top $,
$P_{i}=\bot $ or $R_{i}=\bot $, $P_{i}=\top $. If the variable
queried is $S_{i,j}$, for some $j>i$, Adversary first sets $R_{i}=P_{i}=\bot $
$S_{i,l}=\top $ for all $l\neq j$, and then allows Prover a free choice between either $S_{i,j}=\top $
or $S_{i,j}=\bot $.
The number of free elements, $m$, decreases by one. Therefore
we have \[
T\left(m\right)\geq 2T\left(m-1\right).\]
 By the induction hypothesis $T\left(m-1\right)>\varphi _{m-1}$,
and then $T\left(m\right)\geq 2\varphi _{m-1}\geq \varphi _{m}$.
\item The query is about the source, \mbox{i.e.} the variable queried is $S_{i,j}$,
where $i$ is the source's index. If the $j$th element is busy,
Adversary answers $\bot $. If the $j$th element is free, but far
away from the source, that is there are at least two free elements
between the source and the $j$th element, Adversary answers $\bot $,
too. Neither the position of the source nor the value of $m$ changes.
The only remaining case is when the $j$th element is both free and
near to the source, that is one of the two smallest free elements,
bigger than the source. Adversary now offers Prover a free choice to move the source to
any of these two elements, by giving the corresponding answer: $\top $---source is moved to the $j$th element or $\bot $---source is moved
to the other nearest element. In one of these choices $m$ decreases
by one, and in the other it decreases by two. Therefore we have\[
T\left(m\right)\geq T\left(m-1\right)+T\left(m-2\right).\]
The induction hypothesis gives $T\left(m-1\right)\geq \varphi _{m-1}$
and $T\left(m-2\right)\geq \varphi _{m-2}$. Thus \[
T\left(m\right)\geq \varphi _{m-1}+\varphi _{m-2}=\varphi _{m}.\]
\end{enumerate}
This completes the proof.
\end{proof}
We will show how to modify the proof in order to prove a more general statement.
\begin{proposition}
 Any Res$^*\left(d\right)$ refutation of $d$-RIP$_n$ is of size $2^{\Omega \left(n/d\right)}$.
\end{proposition}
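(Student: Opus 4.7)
The plan is to lift the Prover--Adversary argument from the $d=1$ case, the essential new feature being that each Prover query is now a $d$-disjunction which can touch up to $d$ elements simultaneously. I would keep the notions of \emph{busy}, \emph{free}, and \emph{source}, and again let $T(m)$ denote the size of the subtree rooted at a node at which $m$ free elements lie above the source. The target recurrence is $T(m) \geq T(m-d) + T(m-2d)$, whose solution grows like $\phi^{m/d}$ for the golden ratio $\phi=(1+\sqrt{5})/2$; applied to the initial value $m_0 = n-3$ this delivers the claimed $2^{\Omega(n/d)}$ bound.

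The opening is identical to the $d=1$ proof (first and last elements busy by the axioms, $(n-1)$th busy by hand, source at element $1$), so the only real work is upgrading Adversary's per-query response. The safe patterns for making a new free element $i$ busy transfer verbatim to the relativized language: $(\mathcal{R}_i=\top, P_i=\bot)$, $(\mathcal{R}_i=\bot, P_i=\top)$, or $(\mathcal{R}_i=P_i=\bot)$, each with an appropriate choice of $S_{i,\cdot}$. A small but important observation: when $\mathcal{R}_i=\bot$ is invoked Adversary is free to choose \emph{which} $R^p_i$ to set to $\bot$, so that every single literal mentioning $i$ can be driven to either truth value by at least one admissible pattern.

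On a query of a $d$-disjunction $\bigvee_{s=1}^{t}\ell_s$ with $t\leq d$, let $\mathcal{E}$ denote the set of free elements mentioned. The straightforward case is when the source is not touched: if $\mathcal{E}=\emptyset$ the disjunction is already determined from the current partial assignment; otherwise Adversary commits to making every element of $\mathcal{E}$ busy on both branches---on the $\top$-branch fixing the pattern of some chosen $i^{\ast}\in\mathcal{E}$ so that its contributed literal evaluates to true, and on the $\bot$-branch fixing patterns that falsify every contributing literal. Both partial assignments are axiom-consistent by the observation above, so the resulting recurrence is $T(m)\geq 2\,T(m-d)$, which dominates $T(m-d)+T(m-2d)$.

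The main obstacle, where the Fibonacci structure must be preserved, is the source-touching case. If the query contains a literal $S_{s,j}$ with $s$ the source's index, Adversary answers $S_{s,j}=\bot$ whenever $j$ is not among the two smallest free elements above the source. If however some queried $S_{s,j^{\ast}}$ has $j^{\ast}$ among these two, Adversary offers Prover a move-the-source choice: one branch moves the source past one of the two candidates, consuming at most $d$ free elements (the candidate itself plus the up-to-$(d-1)$ other free elements incidentally touched by the same disjunction), while the other branch moves the source past both, consuming at most $2d$. Verifying axiom-consistency on both branches is routine given the pattern construction. This yields $T(m)\geq T(m-d)+T(m-2d)$, which is the weakest recurrence encountered and hence governs the final asymptotics, closing the argument.
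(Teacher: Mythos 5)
Your proposal follows essentially the same route as the paper's proof: the same Prover--Adversary game on the decision tree, the same busy/free/source bookkeeping, the same three-way case analysis on $d$-disjunction queries (determined, free-elements-only, source-touching), and a two-term recurrence for $T(m)$ whose dominant root is $1+\Theta(1/d)$. The only differences are cosmetic --- you use $T(m)\geq T(m-d)+T(m-2d)$ solved by $\phi^{m/d}$ where the paper uses the slightly tighter $T(m)\geq T(m-d)+T(m-d-1)$ solved by the largest root of $x^{d+1}=x+1$, and the paper pre-sets the last $d+1$ elements busy rather than only the $(n-1)$th --- and both versions yield the stated $2^{\Omega(n/d)}$ bound.
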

\begin{proof}
The proof is very similar to the previous one, so we will explain what changes should be made in there. Our task is now to prove that $T\left(m\right)\geq {\rho _{d}}^{m}$. Where $\rho _{d}$ is the largest real positive root of the equation
\[
x^{d+1}-x-1=0.\]
It is not hard to see that \[
1+\frac{\alpha }{d}\leq \rho _{d}\leq 1+\frac{\beta }{d}\]
for some appropriately chosen constants $\alpha $ and $\beta $.
Thus we would get the desired result as $T\left(m\right)\geq \left(1+\frac{\alpha }{d}\right)^{m}\geq e^{m/\left(1+d/\alpha \right)}$ (we have used the known inequality $\left(1+x\right)^{1+1/x}>e$ for $x>0$).

We first set the variables, associated to the last $d+1$ elements,
by setting $\mathcal{R}_{j}=\top $ (i.e. $R_{j}^{p}=\top $ for all
$p\in \left[d\right]$), $P_{j}=\bot $ and $S_{j,n}=\top $ for all
$j$, $n-d\leq j\leq n-1$.

Prover now can query $d$-disjunctions instead of single variables.
Adversary first simplifies the query, using the current partial assignment,
and then answers as follows:

\begin{enumerate}
\item The resulting query evaluates to either $\bot $ or $\top $ under
the current partial assignment. Adversary replies with the corresponding
value. Clearly the number of free elements, $m$, does not change. 
\item The resulting query involves only free elements; recall that they
all are bigger than the source. Suppose there are $d'$ ($d'\leq d$) such
elements. Adversary allows Prover a free choice between $\bot$ and $\top$,
without moving the source to one of the new elements. This is because
the only way to force such a movement is a positive answer to a query
$\mathcal{R}_{j}\wedge P_{j}$ which is of size $d+1$. The number
of free elements decreases by $d'$, and therefore we have \[
T\left(m\right)\geq 2T\left(m-d'\right).\]
By the induction hypothesis $T\left(m-d'\right)\geq {\rho _{d}}^{m-d'}\geq {\rho _{d}}^{m-d}$,
and then $T\left(m\right)\geq 2{\rho _{d}}^{m-d}>{\rho _{d}}^{m-d}+{\rho _{d}}^{m-d-1}={\rho _{d}}^{m}$.
\item The resulting query involves the source, \mbox{i.e.} contains variable(s)
of the form $S_{i,j}$, where $i$ is the source's index and $j$ is
a free element. Denote the set of all such elements by $\mathbb{J}$,
i.e. $\mathbb{J}=\left\{ j\mid S_{i,j}\textrm{ is in the query and }j\textrm{ is free}\right\} $.
If all the elements in $\mathbb{J}$ are far away from the source,
that is at distance at least $d+2$, Adversary first sets all the
$S_{i,j}$ $\bot $ and then answers the resulting query as in 
Case 2. In the other case, when at least one element from $\mathbb{J}$
is near to the source, it is always possible to move the source to two elements
between $1$st or $(d+1)$th nearest free element at worst, and the choice of which may be given to Prover. In the
former case we have an instance with at least $m-d$ free elements,
and in the latter with at least $m-d-1$. This gives\[
T\left(m\right)\geq T\left(m-d\right)+T\left(m-d-1\right).\]
The induction hypothesis gives $T\left(m-d\right)\geq {\rho _{d}}^{m-d}$
and $T\left(m-d-1\right)\geq {\varphi _{d}}^{m-d-1}$. Thus \[
T\left(m\right)\geq {\rho _{d}}^{m-d}+{\rho _{d}}^{m-d-1}={\rho _{d}}^{m}.\]
\end{enumerate}
This completes the proof.
\end{proof}

\section{Separating p-Res$(1)$ and p-Res$(2)$}
\label{sec:res}

We recall $\mathrm{RLNP}_n$ and its salient properties of being polynomial to refute in $\mathrm{Res}(2)$, but exponential in $\mathrm{Res}(1)$ (as in Section~\ref{MEP-Res}). Polynomiality clearly transfers to fpt-boundedness in $\mathrm{p\mbox{-}Res}(2)$, so we address the lower bound for $\mathrm{p\mbox{-}Res}(1)$.

\subsection{Lower bound: A strategy for Adversary over $\mathrm{RLNP}_n$}

We will give a strategy for Adversary in the game representation of a $\mathrm{p\mbox{-}Res}(1)$ refutation. The argument used in Section~\ref{MEP-Res} does not adapt to the parameterized case, so we instead use a technique developed specifically for the parameterized Pigeonhole principle in \cite{BGLRjournal}.

Recall that a \emph{parameterized clause} is of the form $\neg v_1 \vee \ldots \vee \neg v_{k+1}$ (where each $v_i$ is some $R$, $L$ or $S$ variable). The $i,j$ appearing in $R_i$, $L_{i,j}$ and $S_{i,j}$ are termed \emph{co-ordinates}. We define the following \emph{random restrictions}. Set $R_n:=\top$. Randomly choose $i_0 \in [n-1]$ and set $R_{i_0}:=\top$ and $L_{i_0,n}=S_{i_0,n}:=\top$. Randomly choose $n-\sqrt{n}$ elements from $[n-1]\setminus {i_0}$, and call this set $\mathbb{C}$. Set $R_i := \bot$ for $i \in \mathbb{C}$. Pick a random bijection $\pi$ on $\mathbb{C}$ and set $L_{i,j}$ and $S_{i,j}$, for $i,j \in \mathbb{C}$, according to whether $\pi(j)=i$. Set $L_{i,j}=L_{j,i}=S_{i,j}=S_{j,i}:=\bot$, if $j \in \mathbb{C}$ and $i \in [n] \setminus (\mathbb{C} \cup \{i_0\})$.

What is the probability that a parameterized clause is \textbf{not} evaluated to true by the random assignment? We allow that each of $\neg R_n$, $\neg R_{i,0}$, $\neg L_{i_o,n}$ and $\neg S_{i_0,n}$ appear in the clause---leaving $k+1-4=k-3$ literals, within which must appear $\sqrt{(k-3)/4}$ distinct co-ordinates. The probability that some $\neg R_i$, $i \notin \{i_0,n\}$, fails to be true is bound above by the probability that $i$ is in $[n-1] \setminus (\mathbb{C} \cup \{i_0\})$---which is $\leq \frac{\sqrt{n}-2}{n-2}\leq \frac{1}{\sqrt{n}}$. The probability that some $\neg L_{i,j}$ fails to be true, where one of the co-ordinates $i,j$ is possibly mentioned before and $(i,j) \neq (i_0,n)$, is bound above by the probability that both $i,j$ are in $[n] \setminus \mathbb{C}$ plus the probability that both $i,j$ are in $\mathbb{C}$ and $i = \pi(j)$. This gives the bound $\leq \frac{\sqrt{n}}{n} \cdot \frac{\sqrt{n}-1}{n-1} +  \frac{n-\sqrt{n}}{n} \cdot \frac{n-\sqrt{n}-1}{n-1} \cdot \frac{1}{n-\sqrt{n}-1} \leq \frac{2}{n}  \leq \frac{1}{\sqrt{n}}$. Likewise with $\neg S_{i,j}$.  Thus we get that the probability that a parameterized clause is \textbf{not} evaluated to true by the random assignment is $\leq \frac{1}{\sqrt{n}}^{\sqrt{(k-3)/4}}= n^{-\sqrt{(k-3)/16}}$.

Now we are ready to complete the proof. Suppose fewer than $n^{\sqrt{(k-3)/16}}$ parameterized clauses appear in a $\mathrm{p\mbox{-}Res}(1)$ refutation of $\mathrm{RLNP}_n$, then there is a random restriction as per the previous paragraph that evaluates all of these clauses to true. What remains is a $\mathrm{Res}(1)$ refutation of $\mathrm{RLNP}_{\sqrt{n}}$, which must be of size larger than $n^{\sqrt{(k-3)/16}}$ itself, for $n$ sufficiently large (see \cite{Rel-sep}). Thus we have proved.
\begin{theorem}
Every $\mathrm{p\mbox{-}Res}(1)$ refutation of $\mathrm{RLNP}_n$ is of size $\geq n^{\sqrt{(k-3)/16}}$.
\end{theorem}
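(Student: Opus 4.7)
The plan is to use the method of random restrictions in the spirit of the proof of Proposition~\ref{ResLBnd}, but adapted to the parameterized setting by following the approach developed in \cite{BGLRjournal} for the Pigeonhole principle. The goal is to design random restrictions that kill any parameterized clause (of width $k+1$) with high probability, while leaving behind an unrestricted instance of $\mathrm{RLNP}$ on a substantially smaller universe. Since the ordinary Resolution lower bound on $\mathrm{RLNP}_m$ is $2^{\Omega(m)}$, reducing from universe size $n$ to size $\sqrt{n}$ already leaves us with a refutation that is exponentially large in $\sqrt{n}$, which will handily beat any polynomial-in-$n$ bound.

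First I would define the random restriction so that the unrestricted portion exactly encodes $\mathrm{RLNP}$ on a small random subset. Concretely: fix $R_n:=\top$; pick a random ``favored'' element $i_0$ and set $R_{i_0}:=\top$ together with the variables $L_{i_0,n},S_{i_0,n}:=\top$ needed to witness $i_0$ as an element below $n$; then choose a random ``chaotic'' subset $\mathbb{C}\subseteq [n-1]\setminus\{i_0\}$ of size $n-\sqrt{n}$, assign $R_i:=\bot$ for $i\in\mathbb{C}$, and use a uniformly random bijection $\pi$ on $\mathbb{C}$ to set the $L$ and $S$ variables inside $\mathbb{C}$ consistently with the axioms. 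Finally, cross variables between $\mathbb{C}$ and $[n]\setminus(\mathbb{C}\cup\{i_0\})$ are set to $\bot$. By design, the variables left unset correspond exactly to an instance of $\mathrm{RLNP}_{\sqrt{n}}$ on the roughly $\sqrt{n}$ surviving elements.

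The main technical step is bounding the probability that a single parameterized clause $\neg v_1\vee\ldots\vee\neg v_{k+1}$ survives, i.e.\ is not evaluated to $\top$. Survival requires every literal to remain false (or unset). At most four of these literals can be ``free'' in the sense that the corresponding variables have been forced to $\top$ deterministically (namely $R_n$, $R_{i_0}$, $L_{i_0,n}$, $S_{i_0,n}$). For the remaining $k-3$ literals, each mentions at most two coordinates in $[n]$, so the clause mentions at least $\sqrt{(k-3)/4}$ distinct coordinates among them. For each such distinct coordinate I would show the probability of avoiding being killed is at most $1/\sqrt{n}$: for $R_i$ this is controlled by the probability that $i\notin\mathbb{C}$; for $L_{i,j}$ or $S_{i,j}$ one separately bounds the probabilities that both endpoints lie in $[n]\setminus\mathbb{C}$ or that both lie in $\mathbb{C}$ and happen to be matched by $\pi$. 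Because these events involve disjoint coordinates, they combine multiplicatively, giving survival probability at most $\left(1/\sqrt{n}\right)^{\sqrt{(k-3)/4}} = n^{-\sqrt{(k-3)/16}}$.

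With the per-clause bound in hand, the rest is a union bound. If a $\mathrm{p\mbox{-}Res}(1)$ refutation of $\mathrm{RLNP}_n$ contained fewer than $n^{\sqrt{(k-3)/16}}$ parameterized clauses, then some restriction kills them all, leaving a classical $\mathrm{Res}(1)$ refutation of $\mathrm{RLNP}_{\sqrt{n}}$ of at most that size; but Proposition~\ref{ResLBnd} forces any such refutation to have size $2^{\Omega(\sqrt{n})}$, a contradiction for $n$ sufficiently large. The part I expect to require the most care is the coordinate-counting and near-independence argument inside the survival probability bound: one must argue honestly that distinct coordinates give multiplicatively combining failure events even though the clause literals can re-use coordinates, and that the lower bound $\sqrt{(k-3)/4}$ on distinct coordinates per clause is the right combinatorial quantity to put into the exponent.
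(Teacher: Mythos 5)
Your proposal follows essentially the same route as the paper's own proof: the identical random restriction (fixing $R_n$ and a favoured $i_0$, randomly assigning a chaotic set $\mathbb{C}$ of size $n-\sqrt{n}$ via a random bijection, and zeroing the cross variables), the same per-clause survival bound of $n^{-\sqrt{(k-3)/16}}$ via counting distinct coordinates among the $k-3$ non-exempt literals, and the same union bound reducing to the classical $2^{\Omega(\sqrt{n})}$ lower bound for $\mathrm{RLNP}_{\sqrt{n}}$. The argument is correct as the paper intends it, so there is nothing further to add.
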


\section{Separating p-Res$^*(1)$ and p-Res$^*(2)$}
\label{sec:tree-res}

Let us recall the important properties of $\mathrm{IP}_n$ and $\mathrm{RIP}_n$, from the perspective of Section~\ref{IP-TlRes}. $\mathrm{IP}_n$ admits refutation in $\mathrm{Res}^*(1)$ in polynomial size, as does $\mathrm{RIP}_n$ in $\mathrm{Res}^*(2)$. But all refutations of $\mathrm{RIP}_n$ in $\mathrm{Res}^*(1)$ are of exponential size. In the parameterized world things are not quite so well-behaved. Both $\mathrm{IP}_n$ and $\mathrm{RIP}_n$ admit refutations of size, say, $\leq 4 k!$ in $\mathrm{p\mbox{-}Res}^*(1)$; just evaluate variables $S_{i,j}$ from $i:=n-1$ downwards. Thus ask in sequence 
\[ S_{n-1,n}, S_{n-2,n-1}, S_{n-2,n}, \ldots \ldots, S_{n-k,n-k+1}, \ldots, S_{n-k,n}, \] 
each level $S_{n-i,n-i+1}, \ldots, S_{n-i,n}$ surely yielding a true answer. Clearly this is an fpt-bounded refutation. We are forced to consider something more elaborate, and thus we introduce the \emph{Relativized Vectorized Induction Principle} $\mathrm{RVIP}_n$ below. Roughly speaking, we stretch each single level of $\mathrm{RIP}_n$ into $n$ copies of itself in $\mathrm{RVIP}_n$, to make things easier for Adversary.
\[
\begin{array}{cl}
R_1, P_{1,1}, R_n, \neg P_{n,j} & j \in [n] \\
\bigvee_{l>i, m\in[n]} S_{i,j,l,m} & i,j \in [n] \\
\neg S_{i,j,l,m} \vee \neg R_{i} \vee \neg P_{i,j} \vee R_l & i \in [n-1], j,l,m \in [n] \\
\neg S_{i,j,l,m} \vee \neg R_{i} \vee \neg P_{i,j} \vee P_{l,m} & i \in [n-1], j,l,m \in [n] \\
\end{array}
\]

\subsection{Lower bound: A strategy for Adversary over $\mathrm{RVIP}_n$}

We will give a strategy for Adversary in the game representation of a $\mathrm{Res}^*(1)$ refutation. For convenience, we will assume that Prover never questions the same variable twice (this saves us from having to demand trivial consistencies in future evaluations). 

Information conceded by Adversary of the form $R_i, \neg R_i, P_{i,j}$ and $S_{i,j,l,m}$ makes the element $i$ \emph{busy}. $\neg P_{i,j}$ and $\neg S_{i,j,l,m}$ do not make $i$ busy, in the case of $\neg P_{i,j}$ this is a departure from earlier definitions of business (due to the vectorization, there are now $n$ ways that some $i$ can become true as $P_{i,j}$). 
The \emph{source} is the largest element $i$ for which there is a $j$ such that Adversary has conceded $R_i \wedge P_{i,j}$. Initially, the source is $1$. Adversary always answers $R_1, P_{1,1},$ $R_n, \neg P_{n,j}$ (for  $j \in [n]$), according to the axioms. Thus $i:=1$ and $n$ are somehow special, and the size of the set inbetween is $n-2$. In the following, $i$ refers to the first index of a variable.

If $i$ is below the source. When Adversary is asked $R_i$, $P_{i,j}$ or $S_{i,j,l,m}$, then he answers $\bot$.

If $i$ is above the source. When Adversary is asked $R_i$, or $P_{i,j}$, then he gives Prover a free choice unless: 1.) $R_i$ is asked when some $P_{i,j}$ was previously answered $\top$ (in this case $R_i$ should be answered $\bot$); or 2.) Some $P_{i,j}$ is asked when $R_{i}$ was previously answered $\top$ (in this case $P_{i,j}$ should be answered $\bot$). When Adversary is asked $S_{i,j,l,m}$, then again he offers Prover a free choice. If Prover chooses $\top$ then Adversary sets $P_{i,j}$ and $R_i$ to $\bot$.

Suppose $i$ is the source. Then Adversary answers $P_{i,j}$ and $S_{i,j,l,m}$ as $\bot$, unless $R_i \wedge P_{i,j}$ witnesses the source. If $R_i \wedge P_{i,j}$ witnesses the source, then, if $l$ is not the next non-busy element above $i$, answer $S_{i,j,l,m}$ as $\bot$. If $l$ is the next non-busy element above $i$, then give $S_{i,j,l,m}$ a free choice, unless $\neg P_{l,m}$ is already conceded by Adversary, in which case answer $\bot$.  If Prover chooses $\top$ for $S_{i,j,l,m}$ then Adversary sets $R_l$ and $P_{l,m}$ to $\top$.

Using this strategy, Adversary can not be caught lying until either he has conceded that $k$ variables are true, or he has given Prover at least $n-2$ free choices.

Let $T(p,q)$ be some monotone decreasing function that bounds the size of the game tree from the point at which Prover has answered $p$ free choices $\top$ and $q$ free choices $\bot$. We can see that $T(p,q) \geq T(p+1,q) + T(p,q+1) +1$ and $T(k,n-2-k)\geq 0$. The following solution to this recurrence can be found in \cite{FOCS2007journal}.
\begin{corollary}
There is an $f \in \Omega(n^{k/16})$ \mbox{s.t.} every $\mathrm{p\mbox{-}Res}^*(1)$ refutation of $\mathrm{RVIP}_n$ is of size $\geq f(n)$.
\end{corollary}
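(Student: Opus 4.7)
The plan is to combine the Adversary strategy already described with a recurrence-based counting argument on the size of the underlying decision tree, and then invoke the known solution to that recurrence from \cite{FOCS2007journal}. Since a $\mathrm{p\mbox{-}Res}^*(1)$ refutation of $\mathrm{RVIP}_n$ corresponds to a decision tree solving the search problem (augmented with parameterized clauses $\neg x_{i_1}\vee \ldots \vee \neg x_{i_{k+1}}$), it suffices to lower bound the number of leaves in any such tree where Adversary plays the prescribed strategy.

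First, I would verify that the Adversary strategy is consistent in the following sense: along any play, the partial assignment that Adversary is maintaining can be extended to a total assignment falsifying no initial clause, until either (i) $k$ variables have been set to $\top$, or (ii) Prover has been given at least $n-2$ free choices. For this I would case-analyse each of Prover's possible queries by the position of the first index $i$ relative to the current source, and check each of the four kinds of variables ($R_i$, $P_{i,j}$, $S_{i,j,l,m}$). The source's upward movement is controlled exactly by positive answers to $S_{i,j,l,m}$ at the next non-busy element above the source, and any such positive answer makes $l$ busy and forces $R_l \wedge P_{l,m}$; vectorization ensures that, at the source, the choice of the ``second component'' $m$ is still free, which is essential so that Prover cannot force a contradiction merely by querying $P_{l,m'}$ with $m'\neq m$. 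A falsified parameterized clause would require Adversary to concede $k+1$ variables true, which by construction cannot happen before boundary (i).

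Next, let $T(p,q)$ be a monotone lower bound on the number of leaves in the subtree reached at a state where Prover has already been handed $p$ free choices answered $\top$ and $q$ free choices answered $\bot$. The consistency analysis of the previous paragraph shows: if Adversary's answer is forced then the subtree size is inherited unchanged; if Adversary offers Prover a free choice, the tree branches, and the two children are in states $(p+1,q)$ and $(p,q+1)$. A terminal state is reached only when $p\ge k$ or $p+q\ge n-2$, which delivers the recurrence
\[
T(p,q)\ \geq\ T(p+1,q)+T(p,q+1)+1,\qquad T(k,\,n-2-k)\geq 0.
\]

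The remaining step is to show $T(0,0)=\Omega(n^{k/16})$, for which I would cite the explicit solution from \cite{FOCS2007journal}; this gives the function $f$ claimed. The main obstacle is the first step: one must check, case by case and across all four variable shapes, that Adversary's rules never force him into an inconsistency before reaching the boundary, and in particular that setting $R_l:=\top$ and $P_{l,m}:=\top$ when a positive $S_{i,j,l,m}$ is chosen does not conflict with any earlier concession or with the axioms $R_n$, $\neg P_{n,j}$, and $P_{1,1}$. Once this case analysis is discharged, the recurrence and its cited solution immediately yield the corollary.
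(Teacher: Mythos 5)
Your proposal follows essentially the same route as the paper: the same Adversary strategy, the same observation that Adversary cannot be caught lying before conceding $k$ true variables or granting $n-2$ free choices, the same recurrence $T(p,q)\geq T(p+1,q)+T(p,q+1)+1$ with boundary $T(k,n-2-k)\geq 0$, and the same appeal to the solution in the cited journal paper. The paper is in fact terser than you are---it asserts the consistency of the strategy without the case analysis you rightly identify as the real work---so your write-up is a faithful (and slightly more careful) rendering of the intended argument.
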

We may increase the number of relativizing predicates to define $d$-$\mathrm{RVIP}_n$.
\[
\begin{array}{cl}
R^1_1,\ldots,R^d_1, P_{1,1}, R^1_n,\ldots,R^d_n \neg P_{n,j} & j \in [n] \\
\bigvee_{l>i, m\in[n]} S_{i,j,l,m} & i,j \in [n] \\
\neg S_{i,j,l,m} \vee \neg R^1_{i} \vee \ldots \vee \neg R^d_{i} \vee \neg P_{i,j} \vee R^1_l & i \in [n-1], j,l,m \in [n] \\
\vdots \\
\neg S_{i,j,l,m} \vee \neg R^1_{i} \vee \ldots \vee \neg R^d_{i} \vee \neg P_{i,j} \vee R^r_l & i \in [n-1], j,l,m \in [n], r \in [d] \\
\neg S_{i,j,l,m} \vee \neg R^1_{i} \vee \ldots \vee \neg R^d_{i} \vee \neg P_{i,j} \vee P_{l,m} & i \in [n-1], j,l,m \in [n] \\
\end{array}
\]
\noindent We sketch how to adapt the previous argument in order to demonstrate the following.
\begin{corollary}
There is an $f \in \Omega(n^{k/16d})$ \mbox{s.t.} every $\mathrm{p\mbox{-}Res}^*(d)$ refutation of $\mathrm{RVIP}^{d}_n$ is of size $\geq f(n)$.
\end{corollary}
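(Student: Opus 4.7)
The plan is to adapt the adversary strategy from the previous corollary in a direct way, modifying the notions of \emph{busy} and \emph{source} to account for the $d$ relativizing predicates. The key observation is that satisfying $\mathcal{R}_l \wedge P_{l,m}$ now requires $d+1$ separate true concessions from Adversary, so a source-move is roughly $d$ times as expensive as in the $d=1$ case.

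First, I would redefine the source to be the largest $i$ such that all of $R^1_i,\ldots,R^d_i$ and some $P_{i,j}$ have been conceded $\top$, with business tracked analogously by any $R^r_i$, $P_{i,j}$ or $S_{i,j,l,m}$ concession. Adversary answers $\bot$ below the source. Above the source, queries on $R^r_i$ or $P_{i,j}$ are generically free choices, with two vetoes: if the other $d-1$ of the $R^{r'}_i$'s and some $P_{i,j}$ have already been conceded $\top$, Adversary refuses the last $R^r_i$ (and symmetrically for $P_{i,j}$ when all $d$ of the $R^r_i$'s are already true). This prevents a phantom source from forming above the current one without an $S$-query at the source. Queries on $S_{i,j,l,m}$ above the source but not at it are handled as in the $d=1$ case (free choice; on $\top$, set all $R^r_i$ and $P_{i,j}$ to $\bot$). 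At the source, $S_{i,j,l,m}$ receives a free choice exactly when $l$ is the next non-busy element above; on $\top$, Adversary sets $R^1_l,\ldots,R^d_l,P_{l,m}$ all to $\top$, moving the source.

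The accounting is then as follows. Each source move forces $d+1$ fresh true concessions, and every other $\top$ answer costs at least one. Thus Adversary cannot be caught lying until either Prover has forced roughly $k/(d+1)$ source moves, or has been given at least $n-2$ free choices (exhausting the non-busy pool between $1$ and $n$). The same Pascal-style recurrence $T(p,q) \geq T(p+1,q) + T(p,q+1) + 1$ from the previous proof controls the size of the decision tree, but now with $p$ bounded by $\lfloor k/(d+1)\rfloor$ instead of $k$. Substituting into the solution taken from \cite{FOCS2007journal} and absorbing the extra constant factor $(d+1)/d$ into the $16$ gives the claimed $\Omega(n^{k/16d})$ lower bound.

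The main obstacle is verifying consistency of the strategy under arbitrary interleavings of queries on the $R^r_i$ for a single element $i$: Prover can, for instance, probe $R^1_i, R^2_i, \ldots$ in sequence to try to establish $\mathcal{R}_i$ above the source and then query an $S$ variable to trigger an unplanned source movement. The vetoes above handle this, but one must check that the partial assignment Adversary maintains never falsifies an axiom of $d$-$\mathrm{RVIP}_n$, and in particular never falsifies the relativized-transitivity-style clauses $\neg S_{i,j,l,m} \vee \neg R^1_i \vee \ldots \vee \neg R^d_i \vee \neg P_{i,j} \vee R^r_l$. Showing that the veto rules are exactly enough to preserve consistency, while costing Adversary only $\bot$-labelled (free) branches that already feed into the recurrence for $T(p,q)$, is the one place where care is required.
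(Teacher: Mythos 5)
The proposal has a genuine gap: your Adversary only ever answers queries to single variables, but a $\mathrm{p\mbox{-}Res}^*(d)$ refutation corresponds, per the branching-$d$-program view in the preliminaries, to a decision tree whose nodes query $d$-\emph{disjunctions} of literals. The corollary strengthens \emph{both} the contradiction (to $d$-$\mathrm{RVIP}_n$) \emph{and} the proof system (to $\mathrm{p\mbox{-}Res}^*(d)$), and you have adapted the strategy only to the former. A strategy specified only against single-variable queries gives, at best, a $\mathrm{p\mbox{-}Res}^*(1)$ lower bound for $d$-$\mathrm{RVIP}_n$, which does not yield the intended separation of $\mathrm{p\mbox{-}Res}^*(d)$ from $\mathrm{p\mbox{-}Res}^*(d+1)$. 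The danger is real: Prover can query disjunctions such as $\neg R^1_i \vee \dots \vee \neg R^d_i \vee \neg P_{i,j}$ and learn a great deal about an element with a single free choice --- this is exactly the shape of query exploited by the small $\mathrm{Res}^*(d+1)$ refutation in the upper bound --- so the per-element bookkeeping behind your veto rules has to be redone for composite queries.

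The paper's sketch resolves this by answering a queried disjunction $\ell_1\vee\dots\vee\ell_d$ as forced or free according to the disjunction of how the individual $\ell_i$ would have been answered (free if any one of them would be free), together with the key rule that when a disjunction involving a subset of $R^1_i,\dots,R^d_i,P_{i,j}$ is answered positively, the remaining \emph{unquestioned} variables of that form are set to $\bot$. It is this rule that blocks a phantom source from forming above the current one, and it is this rule --- not, as in your accounting, the $d+1$ true concessions per source move --- that the paper credits with the factor of $d$ in the exponent of $n^{k/16d}$. Your veto rules for single $R^r_i$ and $P_{i,j}$ queries are essentially the singleton case of the paper's rule, and the consistency concern you flag at the end is the right thing to worry about, but without a treatment of $d$-disjunction queries the argument does not establish the stated corollary.
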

\noindent We use essentially the same Adversary strategy in a branching $d$-program. We answer questions $\ell_1 \vee \ldots \vee \ell_d$ as either forced or free exactly according to the disjunction of how we would have answered the corresponding $\ell_i$s, $i \in [d]$, before. That is, if one $\ell_i$ would give Prover a free choice, then the whole disjunction is given as a free choice.
The key point is that once some disjunction involving some subset of $R^1_i,\ldots,R^d_i$ or $P_{i,j}$ (never all of these together, of course), is questioned then, on a positive answer to this, the remaining unquestioned variables of this form should be set to $\bot$. This latter rule introduces the factor of $d$ in the exponent of $n^{k/16d}$.

\subsection{Upper bound: a $\mathrm{Res}^*(d+1)$ refutation of $\mathrm{RVIP}^d_n$}

We encourage the reader to have a brief look at the simpler, but very similar, refutation of $\mathrm{RIP}_n$ in $\mathrm{Res}^*(2)$, of size $O(n^2)$, as depicted in Figure~\ref{fig:figure1}.
\begin{figure}
\[
\xymatrix{
\neg R_n \vee \neg P_n ? \ar[d]_{\top} \ar[r]^{\bot} & \# & \\
\neg R_{n-1} \vee \neg P_{n-1} ? \ar[d]_{\top} \ar[r]^{\bot} & S_{n-1,n} ? \ar[d]_{\top} \ar[r]^{\bot} & \# \\
\vdots \ar[d]_\top & \# & \\
\neg R_1 \vee \neg P_1 ? \ar[d]_{\top} \ar[r]^{\bot} & S_{1,n} ? \ar[d]_{\top} \ar[r]^{\bot} & \cdots \ar[r]^{\bot} & S_{1,2} ? \ar[d]_{\top} \ar[r]^{\bot} & \# \\
\# & \# & & \# \\
}
\]
\caption{Refutation of $\mathrm{RIP}_n$ in $\mathrm{Res}^*(2)$.}
\label{fig:figure1}
\end{figure}
\begin{proposition}
There is a refutation of $d$-$\mathrm{RVIP}_n$ in $\mathrm{Res}^*(d+1)$, of size $O(n^{d+4})$.
\end{proposition}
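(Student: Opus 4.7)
The plan is to generalise the zigzag refutation of $\mathrm{RIP}_n$ in $\mathrm{Res}^*(2)$ shown in Figure~\ref{fig:figure1}, replacing the $2$-term $R_l \wedge P_l$ by the $(d+1)$-term $\mathcal{R}_l \wedge P_{l,m}$, where $\mathcal{R}_l := R^1_l \wedge \ldots \wedge R^d_l$; this term just fits within a $(d+1)$-DNF line of $\mathrm{Res}^*(d+1)$. The stage clauses of the refutation will be the wide disjunctions
$$D_i \ := \ \bigvee_{l > i,\, m \in [n]} (\mathcal{R}_l \wedge P_{l,m}),$$
and I will derive $D_0, D_1, \ldots, D_{n-1}$ in succession, then refute $D_{n-1}$ using the $\neg P_{n,m}$ axioms.

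The main auxiliary object is the helper clause $\Gamma_{i,j} := \neg \mathcal{R}_i \vee \neg P_{i,j} \vee D_i$, one fresh copy for each $i \in [n-1]$ and $j \in [n]$. To build $\Gamma_{i,j}$, I would first, for each $l > i$ and $m \in [n]$, apply $d$ successive $\wedge$-introductions between the $R^r$-axioms ($r \in [d]$) and the $P$-axiom that share the common preamble $\neg S_{i,j,l,m} \vee \neg \mathcal{R}_i \vee \neg P_{i,j}$ to obtain the ``almost'' clause $\neg S_{i,j,l,m} \vee \neg \mathcal{R}_i \vee \neg P_{i,j} \vee (\mathcal{R}_l \wedge P_{l,m})$; then chain-resolve these $(n-i)n$ clauses in turn against the Skolem axiom $\bigvee_{l > i,\, m} S_{i,j,l,m}$, eliminating each $S$-literal exactly once. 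This yields $\Gamma_{i,j}$ in $O(dn^2)$ inferences, packaged as its own tree-like subproof. The base clause $D_0$ is built from the singleton axioms $R^1_1,\ldots,R^d_1,P_{1,1}$ via $d$ $\wedge$-introductions plus weakenings, and the inductive step $D_{i-1} \leadsto D_i$ splits $D_{i-1} = \bigvee_m (\mathcal{R}_i \wedge P_{i,m}) \vee D_i$ and successively cuts on each $(d+1)$-term $\mathcal{R}_i \wedge P_{i,m}$ against the corresponding $\Gamma_{i,m}$, exactly as the stage induction works in the $d$-$\mathrm{RIP}_n$ refutation. Finally, $D_{n-1} = \bigvee_m (\mathcal{R}_n \wedge P_{n,m})$ is refuted by $n$ rounds of weakening a single term $\mathcal{R}_n \wedge P_{n,m}$ down to $P_{n,m}$ and cutting with the axiom $\neg P_{n,m}$.

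Counting inferences, the dominant cost is the $n(n-1)$ independent $\Gamma_{i,j}$-derivations at $O(dn^2)$ each, for a total of $O(dn^4) \subseteq O(n^{d+4})$; the stage cut-chains and the final refutation contribute only $O(n^2)$ and $O(n)$ more. Tree-likeness is the main point to verify, and will rest on three observations: each $\Gamma_{i,m}$ is constructed from scratch and consumed once in the $m$-th cut of stage $i$, each intermediate clause along a stage's cut-chain is consumed exactly once by the next cut, and each $D_i$ is consumed exactly once as the start of stage $i+1$. The one delicate check is legality of the cut that removes the $(d+1)$-term $\mathcal{R}_i \wedge P_{i,m}$, which is fine because $\mathrm{Res}(d+1)$ permits cuts on terms of size up to $d+1$; with that in hand every line of the construction is a genuine $(d+1)$-DNF, as required.
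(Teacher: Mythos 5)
Your construction is correct and, read upside-down as a branching $(d+1)$-program, is essentially the refutation the paper exhibits in Figure~\ref{fig:figure2} (sweep $i$ from $n$ down to $1$, and for each pair $(i,j)$ dispose of the term $\mathcal{R}_i\wedge P_{i,j}$ by running through its Skolem disjunction): your stage clauses $D_i$ and helpers $\Gamma_{i,j}$ are the dual, derivation-style presentation of that same tree, in the manner of the paper's $d$-RIP$_n$ upper bound. The paper's own proof consists only of the $d=1$ picture plus the remark that the generalization is clear, so your explicit $\wedge$-introduction/cut/weakening steps, the $O(dn^{4})$ count, and the tree-likeness check are, if anything, more detailed than what appears there.
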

\begin{proof}
We give the branching program for $d:=1$ in Figure~\ref{fig:figure2}. The generalization to higher $d$ is clear: substitute questions of the form $\neg R_i \vee \neg P_{i,j}$ by questions of the $\neg R^1_i \vee \ldots \vee R^d_i \vee \neg P_{i,j}$.
\end{proof}
\begin{figure}
\[
\xymatrix{
\neg R_n \vee \neg P_{n,n} ? \ar[d]_{\top} \ar[r]^{\bot} & \# & \\
\vdots \ar[d]_{\top} \\
\neg R_n \vee \neg P_{n,1} ? \ar[d]_{\top} \ar[r]^{\bot} & \# & \\
\neg R_{n-1} \vee \neg P_{n-1,n} ? \ar[d]_{\top} \ar[r]^{\bot} & S_{n-1,n,n,n} ? \ar[d]_{\top} \ar[r]^{\bot} & \cdots \ar[r]^{\bot} &  S_{n-1,n,n,1} ? \ar[d]_{\top} \ar[r]^{\bot} & \# \\
\vdots \ar[d]_\top & \# & & \# & \\
\neg R_{n-1} \vee \neg P_{n-1,1} ? \ar[d]_{\top} \ar[r]^{\bot} & S_{n-1,1,n,n} ? \ar[d]_{\top} \ar[r]^{\bot} & \cdots \ar[r]^{\bot} &  S_{n-1,1,n,1} ? \ar[d]_{\top} \ar[r]^{\bot} & \# \\
\vdots \ar[d]_\top & \# & & \# & \\
\vdots \ar[d]_\top  \\
\neg R_1 \vee \neg P_{1,n} ? \ar[d]_{\top} \ar[r]^{\bot} & S_{1,n,n,n} ? \ar[d]_{\top} \ar[r]^{\bot} & \cdots \ar[r]^{\bot} & \cdots \ar[r]^{\bot} & S_{1,n,2,1} ? \ar[d]_{\top} \ar[r]^{\bot} & \# \\
\vdots \ar[d]_\top & \# & & & \# \\
\neg R_1 \vee \neg P_{1,1} ? \ar[d]_{\top} \ar[r]^{\bot} & S_{1,1,n,n} ? \ar[d]_{\top} \ar[r]^{\bot} & \cdots \ar[r]^{\bot} & \cdots \ar[r]^{\bot} & S_{1,1,2,1} ? \ar[d]_{\top} \ar[r]^{\bot} & \# \\
\# & \# & & & \# \\
}
\]
\caption{Refutation of $\mathrm{RVIP}_n$ in $\mathrm{Res}^*(2)$.}
\label{fig:figure2}
\end{figure}

\section{Final remarks}

It is most natural when looking for separators of $\mathrm{p\mbox{-}Res}^*(1)$ and $\mathrm{p\mbox{-}Res}^*(2)$ to look for CNFs, like $\mathrm{RVIP}_n$ that we have given. $\mathrm{p\mbox{-}Res}^*(2)$ is naturally able to process $2$-DNFs and we may consider $\mathrm{p\mbox{-}Res}^*(1)$ acting on $2$-DNFs, when we think of it using any of the clauses obtained from those $2$-DNFs by distributivity. In this manner, we offer the following principle as being fpt-bounded for $\mathrm{p\mbox{-}Res}^*(2)$ but not fpt-bounded for $\mathrm{p\mbox{-}Res}^*(1)$. Consider the two axioms $\forall x (\exists y \ \neg S(x,y) \wedge T(x,y)) \vee P(x)$ and $\forall x,y\ T(x,y) \rightarrow S(x,y)$. This generates the following system $\Sigma_{PST}$ of $2$-DNFs.
\[
\begin{array}{cl}
P_i \vee \bigvee_{j \in [n]} (\neg S_{i,j} \wedge T_{i,j}) & i \in [n] \\
\neg T_{i,j} \vee S_{i,j} & i,j \in [n]
\end{array}
\]
\noindent Note that the expansion of $\Sigma_{PST}$ to CNF makes it exponentially larger. It is not hard to see that $\Sigma_{PST}$ has refutations in $\mathrm{p\mbox{-}Res}^*(2)$ of size $O(kn)$, while any refutation in $\mathrm{p\mbox{-}Res}^*(1)$ will be of size $\geq n^{k/2}$.

All of our upper bounds, \mbox{i.e.} for both $\mathrm{RVIP}_n$ and $\mathrm{RLNP}_n$, are in fact polynomial, and do not depend on $k$. That is, they are fpt-bounded in a trivial sense. If we want examples that depend also on $k$ then we may enforce this easily enough, as follows. For a set of clauses $\Sigma$, build a set of clauses $\Sigma'_k$ with new propositional variables $A$ and $B_1,B'_1,\ldots,B_{k+1},B'_{k+1}$. From each clause $\mathcal{C} \in \Sigma$, generate the clause $A \vee \mathcal{C}$ in $\Sigma'_k$. Finally, augment $\Sigma'_k$ with the following clauses: $\neg A \vee B_1 \vee B'_1$, \ldots, $\neg A \vee B_{k+1} \vee B'_{k+1}$. If $\Sigma$ admits refutation of size $\Theta(n^c)$ in $\mathrm{p\mbox{-}Res}^*(d)$ then $(\Sigma'_k,k)$ admits refutation of size $\Theta(n^c+2^{k+1})$. The parameterized contradictions so obtained are no longer ``strong'', but we could even enforce this by augmenting instead a Pigeonhole principle from $k+1$ to $k$.

It seems hard to prove p-Res$(1)$ lower bounds for parameterized
$k$-clique on a random graph \cite{BGL-SAT}, but we now introduce a contradiction that looks similar but for which lower bounds should be easier.
It is a variant of the Pigeonhole principle which could give us another very natural separation of $\mathrm{p\mbox{-}Res}(1)$ from $\mathrm{p\mbox{-}Res}(2)$. Define the contradiction PHP$_{k+1,n,k}$, on variables $P_{i,j}$ ($i \in [k+1]$ and $j \in [n]$) and $Q_{i,j}$ ($i \in [n]$ and $j \in [k]$), and with clauses:
\[
\begin{array}{ll}
\neg P_{i,j} \vee \neg P_{l,j} & i \neq l \in [k+1]; j \in [n] \\
\neg Q_{i,j} \vee \neg Q_{l,j} & i \neq l \in [n]; j \in [k] \\
\bigvee_{j \in [n]} P_{i,j} & i \in [k] \\
\neg P_{i,j} \vee \bigvee_{l \in [k]} Q_{j,l} & j \in [n] \\
\end{array}
\]
\noindent We conjecture that this principle, which admits fpt-bounded refutation in $\mathrm{p\mbox{-}Res}(2)$, does not in $\mathrm{p\mbox{-}Res}(1)$.

We have left open the technical question as to whether suitably defined, further-relativized versions of RLNP$_n$ can separate $\mathrm{p\mbox{-}Res}(d)$ from $\mathrm{p\mbox{-}Res}(d+1)$. We conjecture that they can.

Finally, it is possible that the results of Section~\ref{sec:tree-res} might be derived in a simpler manner using the assymetric Prover-Delayer game of \cite{Beyersdorff13IPL}.

\section*{Acknowledgements}

The authors would like to thank Jan Kraj\'{\i}\v{c}ek for asking the question which led to writing this paper. Many thanks to S{\o}ren Riis for the helpful discussions, the first author has had with him. Finally, many thanks to our anonymous referees for pointing out important errors.

\bibliographystyle{plain}

\end{document}